\newtheorem{theorem}{Theorem}
\newtheorem{remark}{Remark}
\newtheorem{corollary}{Corollary}
\begin{document}
\title{Outer Bounds for the Capacity Region of a Gaussian Two-way Relay Channel}

\author{\IEEEauthorblockN{Ishaque Ashar K, Prathyusha V, Srikrishna Bhashyam and Andrew Thangaraj}
\IEEEauthorblockA{Department of Electrical Engineering, Indian Institute of Technology Madras, India\\
Email: \{skrishna,andrew\}@ee.iitm.ac.in}
}
\maketitle

\begin{abstract}
  \let\thefootnote\relax\footnotetext{This work was supported in part
    by Renesas Mobile Corporation. This work was performed at the
    Department of Electrical Engineering, IIT Madras. Ishaque Ashar K
    is currently with Redpine Signals, Inc., Hyderabad, India.}  We
  consider a three-node half-duplex Gaussian relay network where two
  nodes (say $a$, $b$) want to communicate with each other and the
  third node acts as a relay for this two-way communication. Outer
  bounds and achievable rate regions for the possible rate pairs
  $(R_a, R_b)$ for two-way communication are investigated. The modes
  (transmit or receive) of the half-duplex nodes together specify the
  {\em state} of the network. A relaying protocol uses a specific
  sequence of states and a coding scheme for each state. In this
  paper, we first obtain an outer bound for the rate region of all
  achievable $(R_a, R_b)$ based on the half-duplex cut-set bound. This
  outer bound can be numerically computed by solving a linear
  program. It is proved that at any point on the boundary of the outer
  bound only four of the six states of the network are used. We then
  compare it with achievable rate regions of various known
  protocols. We consider two kinds of protocols: (1) protocols in
  which all messages transmitted in a state are decoded with the
  received signal in the same state, and (2) protocols where
  information received in one state can also be stored and used as
  side information to decode messages in future states. Various
  conclusions are drawn on the importance of using all states, use of
  side information, and the choice of processing at the relay.  Then,
  two analytical outer bounds (as opposed to an optimization problem
  formulation) are derived. Using an analytical outer bound, we
  obtain the symmetric capacity within 0.5 bits for some channel
  conditions where the direct link between nodes $a$ and $b$ is weak.

\end{abstract}

\IEEEpeerreviewmaketitle

\section{Introduction}
Two-way or bidirectional relaying has attracted significant interest
recently
\cite{PopYom07,BaiChu08,ZhaLie09,WilNar10,kim2008performance,kim2011achievable,tian2012asymmetric,GonYueWan11,icc12sixstate}. In
two-way relaying, a relay node assists in bidirectional communication
between two nodes.
\begin{figure}[!hb]
\centering
\resizebox{1.8in}{!}{\input 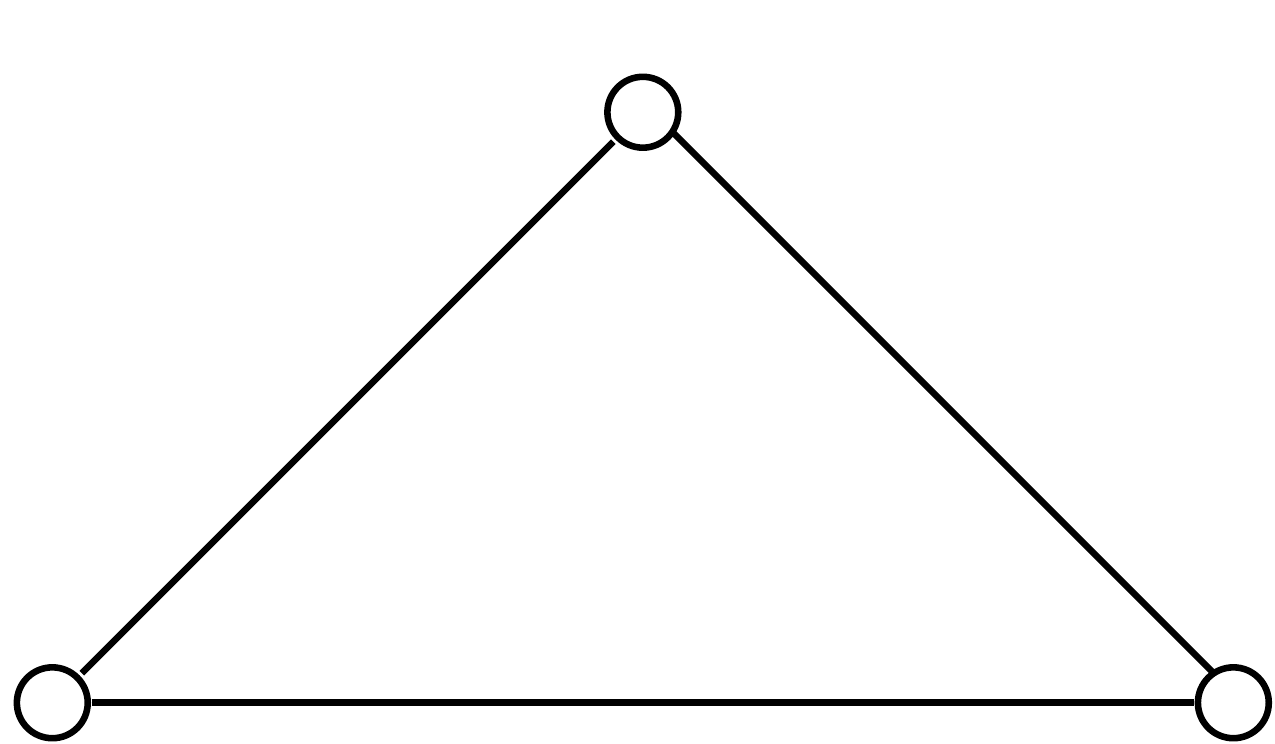_t}
\caption{Half-duplex Two-way Gaussian Relay Channel with Direct Link.}
\label{network}
\end{figure}
Fig. \ref{network} shows a three-node two-way relay network in which
nodes $a$ and $b$ want to communicate with each other at rates $R_a$ and $R_b$, respectively, while node $r$ acts as 
the relay. While two-way relaying without the direct link between
nodes $a$ and $b$ is studied in
\cite{PopYom07,BaiChu08,ZhaLie09,WilNar10}, the more general two-way
relaying with the direct link is studied in
\cite{kim2008performance,kim2011achievable,tian2012asymmetric,GonYueWan11,icc12sixstate}. In our
work, we consider two-way relaying with the direct link.


In \cite{kim2008performance}, achievable rate regions are derived for
three protocols, namely the multiple access broadcast (MABC) protocol,
time division broadcast (TDBC) protocol, and the hybrid broadcast
(HBC) protocol under the decode-and-forward (DF) relaying scheme. In
\cite{kim2011achievable}, the TDBC and MABC protocols have been
studied under other relaying schemes such as amplify-and-forward (AF),
compress-and-forward (CF), mixed forward (MF) and Lattice forward
(LF). A partial DF protocol, which is a superposition of DF and CF was
studied in \cite{schnurr2008achievable}. In \cite{tian2012asymmetric},
a three-phase cooperative MABC protocol (CoMABC), which
outperforms the MABC and TDBC protocols in terms of sum rate in
asymmetric channel conditions, was proposed.  A transmission scheme based on doubly
nested lattice codes was also proposed in \cite{tian2012asymmetric},
i.e., the CoMABC protocol uses the Lattice forward relaying strategy
and is not a DF relaying protocol.

The number of phases (or network states) in all the above protocols is
between 2 and 4. Each phase (or network state) refers to a particular
configuration of transmit and receive modes for the half-duplex
nodes.  In the absence of the direct link between $a$ and $b$, the 2
states (multiple access and broadcast) in the MABC protocol are
sufficient. In the presence of the direct link between $a$ and $b$,
more states are required. The TDBC, HBC, and CoMABC protocols use more
states to improve performance. However, all possible states have not
been considered in these protocols.  Since each node can be in transmit or receive mode,
the 3-node half-duplex two-way relay channel has $2^3$ ($= 8$)
possible states. Of these 8, the two states in which all nodes are
receiving or all nodes are transmitting are not useful in information
transfer. Therefore, there are 6 useful states as shown in
Fig. \ref{states}. A protocol that uses all these 6 states has been proposed in
\cite{GonYueWan11}, and the achievable rate region of this protocol has
been derived. The improvement in achievable rate region with respect
to the HBC protocol has been shown in \cite{GonYueWan11}.  The
6-state protocol has also been presented independently in
\cite{icc12sixstate} in the context of a degrees of freedom analysis
of two-way relaying with multiple-antenna nodes.
\begin{figure}[!ht] 
\centering
\resizebox{2.6in}{!}{\input 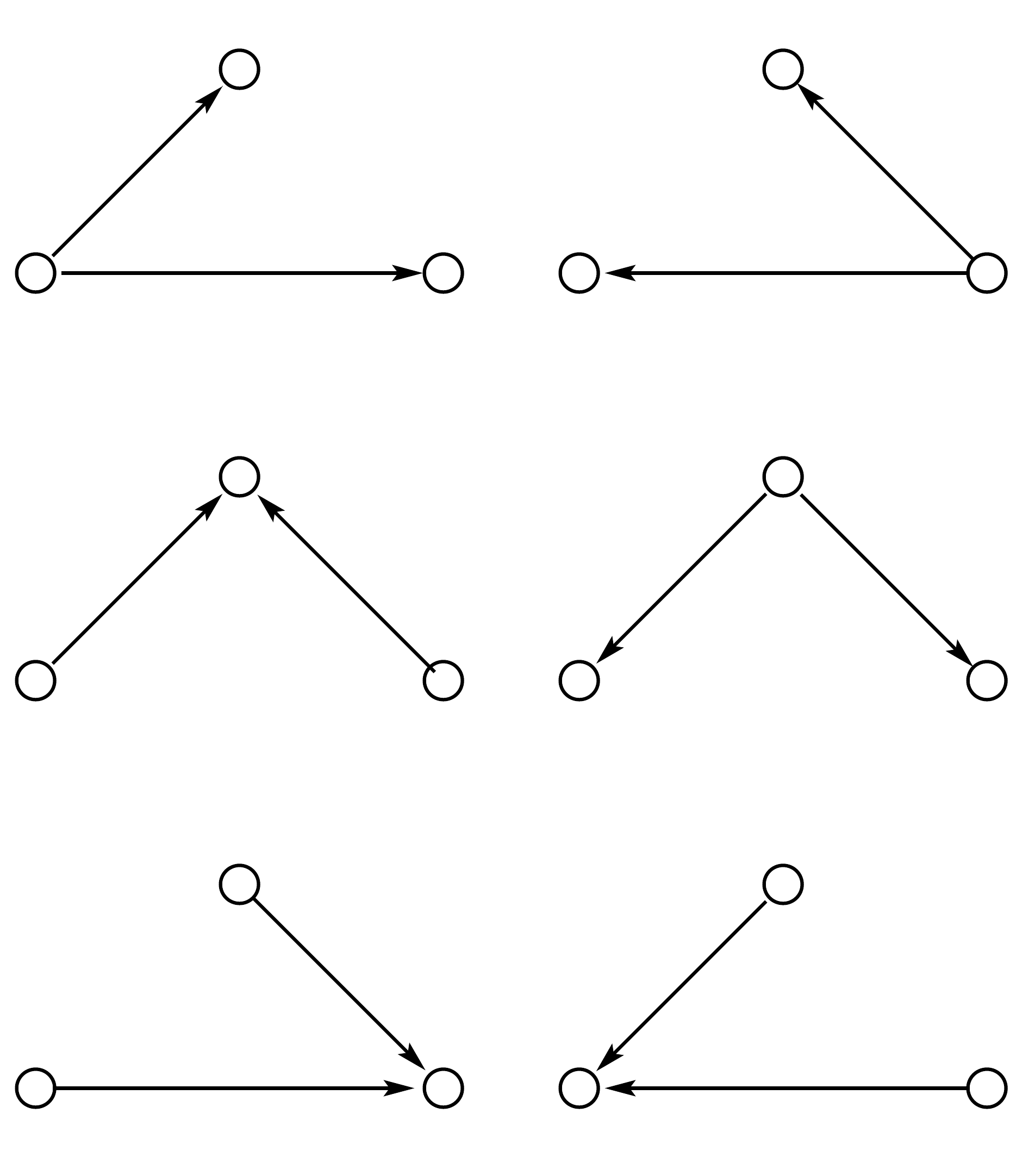_t}
\caption{Possible states in a Half-duplex Two-way Relay Channel}
\label{states}
\end{figure}

In this paper, we first obtain an outer bound for the two-dimensional
capacity region (of all possible rate pairs $(R_a,R_b)$) that is valid for all
relaying protocols.  This bound is derived based on the half-duplex
cutset bound in \cite{khojastepour2003capacity} and is more general
than the bounds in \cite{kim2008performance,kim2011achievable,tian2012asymmetric},which are for specific protocols --
TDBC, MABC, HBC, and CoMABC. The
outer bounds for TDBC, MABC, and CoMABC protocols consider at most 3
states and the bound for the HBC protocol considers 4 states. Further,
the outer bound for the HBC protocol in \cite{kim2008performance} is
difficult to compute and has not been computed. In contrast, the new outer bound can be numerically computed by solving a linear
program. It is also proved that at any point on the boundary of the
outer bound only four of the six states of the network are used. No outer bound is derived or
compared with for the 6-state protocol in \cite{GonYueWan11}. 

We then compare the new outer bound with achievable rate regions of the best known
protocols so far -- HBC, CoMABC, and 6-state protocols. We consider
two kinds of protocols: (1) protocols in which all messages
transmitted in a state are decoded with the received signal in the
same state, and (2) protocols where information received in one state
can also be stored and used as side information to decode messages in
future states. Through these comparisons, various conclusions are drawn on the importance of
using all states, use of side information, and the choice of
processing at the relay. In this context, a simple 6-state DF protocol  
without side information using all six states is also
presented and compared with. 

Finally, two analytical outer bounds (as
opposed to an optimization problem formulation) are derived. Using an
analytical outer bound, we obtain the symmetric ($R_a=R_b$) capacity within 0.5 bits for
some channel conditions where the direct link between nodes $a$ and
$b$ is weak.

\section{System Model}
Consider two nodes $a$ and $b$ communicating with each other in the network shown in Fig. \ref{network}. The relay node $r$ assists this communication by receiving the information from these nodes and forwarding to the desired destination. All nodes are half-duplex nodes with a receiver noise variance of $N$. For simplicity, each node is assumed to have the same transmit power $P$ for each state. Let $h_1$, $h_2$ and $h_3$ be the gains of  channels $a$-$r$, $b$-$r$ and $a$-$b$, respectively. The SNRs of these channels are denoted $\gamma_1=\frac{h_1^2P}{N}$, $\gamma_2=\frac{h_2^2P}{N}$ and $\gamma_3=\frac{h_3^2P}{N}$. We use $R_a$ and $R_b$ to denote the rate of data transmission (bits per channel use) from node $a$ to node $b$ and from node $b$ to node $a$, respectively. We consider $\gamma_3 \leq \gamma_1$, $\gamma_3\leq\gamma_2$, i.e., the direct link between $a$ and $b$ is weaker than the links $a$-$r$ and $b$-$r$. Further, let $\gamma_1 \leq \gamma_2$, without loss of generality.  Let $\mathcal{C}(\gamma)\stackrel{\Delta}{=}\log_2(1+\gamma)$ represent the capacity of a complex gaussian channel with SNR of $\gamma$.

\section{Outer bound for any protocol}
In this section, we derive an outer bound for the capacity region of
the half-duplex two-way Gaussian relay channel. This bound is an outer bound
for any relaying protocol for the two-way relay network irrespective of the number of states 
and the relaying scheme (eg. AF, DF, CF) used at the relay node. This outer bound is derived using
the half-duplex cut-set bound for information flow in an
arbitrary half-duplex relay network in
\cite{khojastepour2003capacity}. In this two-way relay network, we
have two flows, one from $a$ to $b$ and another from $b$ to $a$. We
derive the outer bound for the two flows considering two cuts that
separate nodes $a$ and $b$ and information flow in both directions
across the cuts. Combining these four bounds (2 cuts $\times$ 2
directions), we get the rate region outer bound.

Let $\lambda_i$, $i=1,2,\ldots,6$, denote the fraction of channel uses in network state $i$.  
We obtain the outer bound in the following manner. (1) For a given real number $k$, upper 
bound the maximum possible rate $R_a$ subject to $R_a = k R_b$, and (2) Vary $k$ and 
determine the whole region. For example, we can set $k = \tan{\theta}$ and vary $\theta$ from 0 to $90^o$.

\begin{theorem}
 Given $R_a = k R_b$ for some $k \ge 0$, the maximum possible $R_b$ is upperbounded by $C_{bk}$ obtained by solving the following linear program:
\[
C_{bk} = \max_{R_b, \{\lambda_i\}} R_b
\]
subject to
\begin{equation}
 \begin{split}
  kR_b &\leq \lambda_1\mathcal{C}(\gamma_1+\gamma_3)+\lambda_3\mathcal{C}(\gamma_1) +\lambda_5\mathcal{C}\left(\gamma_3\right),\\
  kR_b &\leq \lambda_1\mathcal{C}(\gamma_3)+\lambda_4\mathcal{C}(\gamma_2) +\lambda_5\mathcal{C}\left(\left(\sqrt{\gamma_2}+\sqrt{\gamma_3}\right)^2\right),\\
  R_b &\leq \lambda_2\mathcal{C}(\gamma_2+\gamma_3)+\lambda_3\mathcal{C}(\gamma_2) +\lambda_6\mathcal{C}\left(\gamma_3\right),\\
  R_b &\leq \lambda_2\mathcal{C}(\gamma_3)+\lambda_4\mathcal{C}(\gamma_1) +\lambda_6\mathcal{C}\left(\left(\sqrt{\gamma_1}+\sqrt{\gamma_3}\right)^2\right),\\
  &\sum_{i = 1}^{6} \lambda_i \le 1,\;\lambda_i \ge 0,\;R_b \ge 0.\\
 \end{split}
\label{constraints}
\end{equation}
\label{th1}
\end{theorem}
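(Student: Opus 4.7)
The plan is to invoke the half-duplex cut-set bound of \cite{khojastepour2003capacity} applied to the two cuts that separate $a$ from $b$, namely $S_1=\{a\}$ versus $\{b,r\}$ and $S_2=\{a,r\}$ versus $\{b\}$, in each of the two information-flow directions. This gives four inequalities ($2$ cuts $\times$ $2$ directions), each of the form: the rate in that direction is bounded by $\sum_{i=1}^{6}\lambda_i\,I_i$, where $I_i$ is the single-letter mutual information across the cut in state $i$, maximized over Gaussian inputs under the per-node power constraint $P$. Substituting $R_a=kR_b$ into the two $a\to b$ bounds and keeping the two $b\to a$ bounds as-is yields exactly the four rate constraints of the theorem; the time-sharing constraint $\sum_i \lambda_i\le 1$ with $\lambda_i\ge 0$ completes the LP.

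First, I would fix the enumeration of the six useful half-duplex states: state 1 with only $a$ transmitting, state 2 with only $b$ transmitting, state 3 with $a,b$ transmitting (MAC at $r$), state 4 with only $r$ transmitting (BC), state 5 with $a,r$ transmitting, and state 6 with $b,r$ transmitting. For each of the four (cut, direction) combinations, I would identify the set of states that actually contribute, namely those in which some node on the sending side of the cut is transmitting while a node on the receiving side is in receive mode. This picks out states $\{1,3,5\}$ for direction $a\to b$ across $S_1$, states $\{1,4,5\}$ across $S_2$, and the symmetric triples $\{2,3,6\}$ and $\{2,4,6\}$ for direction $b\to a$, matching the three-term sums in \eqref{constraints}. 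For each contributing state I would then compute the per-state capacity by conditioning on the codewords of any transmitter that lies on the complement side of the cut (which eliminates its SNR contribution), leaving one of three shapes: a single-SNR term $\mathcal{C}(\gamma)$, a maximum-ratio combining term $\mathcal{C}(\gamma_i+\gamma_3)$ when one transmitter reaches two in-cut receivers, or a coherent-combining term $\mathcal{C}\bigl((\sqrt{\gamma_i}+\sqrt{\gamma_3})^2\bigr)$ when two in-cut transmitters reach a single receiver.

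The main obstacle is justifying the coherent-combining upper bound $\mathcal{C}\bigl((\sqrt{\gamma_i}+\sqrt{\gamma_3})^2\bigr)$ in states 5 and 6, where the two transmitters ($a,r$ or $b,r$) operate under independent per-node power constraints rather than a joint one. The standard argument is to observe that for $Y=h_iX_i+h_3X_3+Z$ with $\mathrm{E}[X_j^2]\le P$, the variance of the signal part is bounded by $(|h_i|\sqrt{P}+|h_3|\sqrt{P})^2$ via Cauchy--Schwarz applied to the cross term, so Gaussian inputs with arbitrary cross-correlation yield the tightest entropy bound on $Y$. Once this step is handled, the remaining ingredients are standard: the maximum-entropy property of Gaussian inputs for each single-state mutual information, and the weighting by $\lambda_i$ dictated by the half-duplex cut-set bound. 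Assembling these four inequalities, the LP formulation of $C_{bk}$ in the statement is immediate.
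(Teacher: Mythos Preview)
Your proposal is correct and follows essentially the same route as the paper: apply the half-duplex cut-set bound of \cite{khojastepour2003capacity} to the two cuts $\{a\}$ and $\{a,r\}$ in each direction, identify the contributing states $\{1,3,5\}$, $\{1,4,5\}$, $\{2,3,6\}$, $\{2,4,6\}$, upper bound each per-state mutual information by the appropriate Gaussian term, and substitute $R_a=kR_b$. The only cosmetic difference is that the paper keeps the correlation coefficients $\rho_5,\rho_6$ explicit in the state-5 and state-6 terms (yielding $\mathcal{C}((1-\rho^2)\gamma_3)$ on one side and $\mathcal{C}(\gamma_i+\gamma_3+2\rho\sqrt{\gamma_i\gamma_3})$ on the other) and then relaxes them separately to $0$ and $1$, whereas your Cauchy--Schwarz argument jumps directly to the relaxed endpoints; both are valid outer-bound relaxations and produce the identical LP.
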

\begin{proof}
For a half-duplex relay network with $m$ states, for which the sequence of states is fixed with asymptotic fraction of time $\lambda_i$ in state $i$, any achievable rate $R$ of information flow  is upper bounded as follows \cite{khojastepour2003capacity}:
\begin{equation}
R \le   \min_S \sum_{i=1}^m \lambda_i I(X_S;Y_{S^c}|X_{S^c}, i),
\end{equation}
where a cut partitions the set of nodes into sets $S$ and $S^c$ such
that the source nodes are in $S$, the destination nodes are in $S^c$, and $S^c$
is the complement of $S$. We use this bound for $R_a$ and $R_b$ as
follows. In order to bound $R_a$ (rate of information flow from $a$ to
$b$), we consider the cuts defined by $S_1=\{a\}$, and $S_2=\{a,
r\}$. Similarly to bound $R_b$, we use $S_3=\{b\}$ and $S_4=\{b, r\}$.

Using $S_1$ and $S_2$ as defined above, we get the following bounds.
\begin{equation}
R_a \le  \min \{R_{a1}, R_{a2} \}, 
\label{bounda}
\end{equation}
where 
\begin{equation}
 \begin{split}
  R_{a1} &= \lambda_1 I\left(X_a;Y_r,Y_b|i = 1\right)+\lambda_3 I\left(X_a;Y_r|X_b,i = 3\right)\\
  &+\lambda_5 I\left(X_a;Y_b|X_r, i = 5\right),
 \end{split}
\end{equation}
\begin{equation}
 \begin{split}
  R_{a2} &= \lambda_1 I\left(X_a;Y_b| i = 1\right)+\lambda_4 I\left(X_r;Y_b| i = 4\right)\\
  &+\lambda_5 I\left(X_a,X_r;Y_b| i = 5\right).
 \end{split}
\end{equation}
Similarly, using $S_3$, and $S_4$, we get
\begin{equation}
R_b \le   \min \{R_{b3}, R_{b4} \}, 
\label{boundb}
\end{equation}
where 
\begin{equation}
 \begin{split}
  R_{b3} = \lambda_2 I\left(X_b;Y_r,Y_a| i = 2\right)&+\lambda_3 I\left(X_b;Y_r|X_a, i = 3\right)\\
  &+\lambda_6 I\left(X_b;Y_a|X_r, i = 6\right)
 \end{split}
\end{equation}
\begin{equation}
 \begin{split}
  R_{b4} = \lambda_2 I\left(X_b;Y_a| i = 2\right)&+\lambda_4 I\left(X_r;Y_a| i = 4\right)\\
  &+\lambda_6 I\left(X_b,X_r;Y_a| i = 6\right)
 \end{split}
\end{equation}
Now, we can upper bound the various mutual information terms as in \cite{khojastepour2003capacity} to get:
\begin{equation}
 \begin{split}
  &I\left(X_a;Y_r,Y_b|i = 1\right) \leq \mathcal{C}(\gamma_1+\gamma_3),\\
  &I\left(X_b;Y_r,Y_a|i = 2\right) \leq \mathcal{C}(\gamma_2+\gamma_3),\\
  &I\left(X_a;Y_b|X_r, i = 5\right) \leq \mathcal{C}\left((1-\rho_5^2)\gamma_3\right),\\
  &I\left(X_b;Y_a|X_r, i = 6\right) \leq \mathcal{C}\left((1-\rho_6^2)\gamma_3\right),
 \end{split}
\label{rho56-0}
\end{equation}
where $\rho_5$ is the correlation coefficient between $X_a$ and $X_r$ and $\rho_6$ is the correlation coefficient between $X_b$ and $X_r$.
\begin{equation}
  I\left(X_a;Y_r|X_b, i = 3\right) \leq \mathcal{C}\left(\gamma_1\right), I\left(X_b;Y_r|X_a, i = 3\right) \leq \mathcal{C}\left(\gamma_2\right)
\end{equation}
Similarily, we get
\begin{equation}
 \begin{split}
  I\left(X_a;Y_b| i = 1\right) &\leq \mathcal{C}\left(\gamma_3\right),  I\left(X_r;Y_b| i = 4\right) \leq \mathcal{C}\left(\gamma_2\right),\\
  I\left(X_b;Y_a| i = 2\right) &\leq \mathcal{C}\left(\gamma_3\right),  I\left(X_r;Y_a| i = 4\right) \leq \mathcal{C}\left(\gamma_1\right),\\ 
  I\left(X_a,X_r;Y_b| i = 5\right) &\leq \mathcal{C}\left(\gamma_2+\gamma_3+2\rho_5\sqrt{\gamma_2\gamma_3}\right),\\ 
  I\left(X_b,X_r;Y_a| i = 6\right) &\leq \mathcal{C}\left(\gamma_1+\gamma_3+2\rho_6\sqrt{\gamma_1\gamma_3}\right).\\ 
 \end{split}
\label{rho56-1}
\end{equation}
Using these mutual information upper bounds in the bounds (\ref{bounda}) and (\ref{boundb}), we get: 
\begin{equation}
 \begin{split}
  R_a &\leq \lambda_1\mathcal{C}(\gamma_1+\gamma_3)+\lambda_3\mathcal{C}(\gamma_1)+\lambda_5\mathcal{C}\left(\left(1-\rho_5^2\right)\gamma_3\right),\\
  R_a &\leq \lambda_1\mathcal{C}(\gamma_3)+\lambda_4\mathcal{C}(\gamma_2)+\lambda_5\mathcal{C}\left(\gamma_2+\gamma_3+2\rho_5\sqrt{\gamma_2\gamma_3}\right),\\
  R_b &\leq \lambda_2\mathcal{C}(\gamma_2+\gamma_3)+\lambda_3\mathcal{C}(\gamma_2)+\lambda_6\mathcal{C}\left(\left(1-\rho_6^2\right)\gamma_3\right),\\
  R_b &\leq \lambda_2\mathcal{C}(\gamma_3)+\lambda_4\mathcal{C}(\gamma_1)+\lambda_6\mathcal{C}\left(\gamma_1+\gamma_3+2\rho_6\sqrt{\gamma_1\gamma_3}\right),\\
 \end{split}
\end{equation}
Using $\mathcal{C}\left((1-\rho_5^2)\gamma_3\right) \le \mathcal{C}\left(\gamma_3\right)$ and $\mathcal{C}\left((1-\rho_6^2)\gamma_3\right) \le \mathcal{C}\left(\gamma_3\right)$ in (\ref{rho56-0}), $\rho_5, \rho_6 \le 1$ in (\ref{rho56-1}), and $R_a = k R_b$, we get the constraints in (\ref{constraints}).
Since these constraints have to be satisfied by any achievable rate for a given set of $\lambda_i$'s, the upper bound $C_{ak}$ can be determined by maximizing $R_a$ over the $\lambda_i$'s subject to the constraints of the linear program in (\ref{constraints}). 
\end{proof}

The complete outer bound for the capacity region can be obtained by solving the above linear program for different values of the ratio $k$. It is worth noting here that this outer bound is valid for any relaying scheme (since it is based only on the cut-set bound and mutual information bounds) for the two-way relay channel and is not restricted to DF or other protocols presented here. 

Using results from linear programming, we can now argue that for any point on the boundary of the outer bound (i.e., for any $k$), at most four of the six $\lambda_i$'s are non-zero, i.e., at most 4 states are used. However, it should be noted that which states are used depends on $k$.
\begin{theorem}
For any $k$, there are at most four positive $\lambda_i$'s in the solution to the linear program in (\ref{constraints}). 
\end{theorem}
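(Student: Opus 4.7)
The plan is to apply the fundamental theorem of linear programming. First, I would put the LP in (\ref{constraints}) into standard form. It has $7$ decision variables ($R_b$ together with $\lambda_1,\ldots,\lambda_6$) and $5$ non-trivial inequality constraints (the four rate bounds and the total-time bound $\sum_i \lambda_i \le 1$), on top of non-negativity on every variable. Adding one slack variable per non-trivial inequality produces a standard-form LP with $12$ variables and $5$ equality constraints.

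Next, I would verify that the feasible region is a non-empty bounded polytope: $R_b=0$ with $\lambda_i=0$ for all $i$ is feasible, and any feasible point satisfies $\lambda_i\in[0,1]$ (from the total-time bound together with non-negativity) and $R_b\le \mathcal{C}(\gamma_2+\gamma_3)+\mathcal{C}(\gamma_2)+\mathcal{C}(\gamma_3)$ (from the third rate constraint, using $\lambda_i\le 1$). An optimum therefore exists, and by the fundamental theorem of LP there is an optimal basic feasible solution (BFS) at which at most $m=5$ variables are positive — the basic ones — while the remaining $12-5=7$ variables are zero. Degeneracy only decreases the number of strictly positive variables, so the bound of $5$ is maintained in all cases.

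The conclusion follows by restricting attention to the $7$ original variables: at most $5$ of $\{R_b,\lambda_1,\ldots,\lambda_6\}$ are positive at this optimal BFS. If $C_{bk}>0$, then $R_b$ must be one of those positive variables, leaving at most $4$ positive $\lambda_i$'s. In the degenerate case $C_{bk}=0$, the trivial point $R_b=0$, $\lambda_i=0$ for all $i$ is already optimal and has no positive $\lambda_i$, so the claim holds immediately. The only mildly delicate step is the bookkeeping of variable and constraint counts — in particular, remembering to include $R_b$ itself among the decision variables when counting positives — after which the statement is a direct appeal to standard LP theory.
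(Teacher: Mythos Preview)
Your proposal is correct and follows essentially the same route as the paper: convert the LP to standard form, count variables and equality constraints, invoke the basic-feasible-solution bound of at most $m=5$ positive variables, and then subtract one for $R_b$. The only cosmetic difference is that the paper first argues the time-budget constraint $\sum_i\lambda_i\le 1$ is tight at optimum and therefore adds only four slack variables (getting $7+4=11$ variables with $5$ equalities), whereas you add a fifth slack for that constraint too (getting $12$ variables with $5$ equalities); either way the BFS bound is $5$ and the conclusion is identical. Your treatment is, if anything, slightly more careful in explicitly verifying boundedness and separately dispatching the degenerate case $C_{bk}=0$.
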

\begin{proof}
It is optimal to choose $\sum_{i = 1}^{6} \lambda_i = 1$, i.e., satisfy the last constraint with equality. There are 4 other inequality constraints. Adding 4 slack variables, we can convert these constraints to equality constraints. Now, there are $7+4 = 11$ variables in the linear program with 5 equality constraints. Therefore, from \cite{BerTsi97}, we know that there are at most 5 positive variables in the solution. Since $R_b$ should be positive, we can have at most 4 of the $\lambda_i$'s to be positive. 
\end{proof}

\vspace*{-1mm}
\subsection{Alternative Outer Bound}
An alternative outer bound can be obtained as follows: (1) For a given
real number $k$, upper bound the maximum possible weighted sum rate
$R_a + kR_b$, and (2) Vary $k$ and determine the whole region.

\begin{theorem}
For some $k \ge 0$, the maximum possible $R_a + kR_b$ is upperbounded by $C_{k}$ obtained by solving the following linear program:
\[
C_{k} = \max_{R_a, R_b, \{\lambda_i\}} R_a + kR_b, \mbox{~~subject to~~}
\]
{\allowdisplaybreaks\begin{equation}
 \begin{split}
  R_a &\leq \lambda_1\mathcal{C}(\gamma_1+\gamma_3)+\lambda_3\mathcal{C}(\gamma_1) +\lambda_5\mathcal{C}\left(\gamma_3\right),\\
  R_a &\leq \lambda_1\mathcal{C}(\gamma_3)+\lambda_4\mathcal{C}(\gamma_2) +\lambda_5\mathcal{C}\left(\left(\sqrt{\gamma_2}+\sqrt{\gamma_3}\right)^2\right),\\
  R_b &\leq \lambda_2\mathcal{C}(\gamma_2+\gamma_3)+\lambda_3\mathcal{C}(\gamma_2) +\lambda_6\mathcal{C}\left(\gamma_3\right),\\
  R_b &\leq \lambda_2\mathcal{C}(\gamma_3)+\lambda_4\mathcal{C}(\gamma_1) +\lambda_6\mathcal{C}\left(\left(\sqrt{\gamma_1}+\sqrt{\gamma_3}\right)^2\right),\\
  &\sum_{i = 1}^{6} \lambda_i \le 1,\;\lambda_i \ge 0,\;R_b \ge 0,; R_a \ge 0.\\
 \end{split}
\label{constraints2}
\end{equation}}
\label{th3}
\end{theorem}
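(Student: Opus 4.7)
The plan is to recycle almost verbatim the derivation used for Theorem \ref{th1}, only changing the final optimization objective. I would begin by noting that the four cut-set inequalities on $R_a$ and $R_b$ derived in the proof of Theorem \ref{th1} — namely the bounds (\ref{bounda}) and (\ref{boundb}) together with the mutual-information upper bounds (\ref{rho56-0}) and (\ref{rho56-1}) — were obtained without ever using the tie $R_a = kR_b$. They are therefore valid for any achievable rate pair $(R_a, R_b)$ and any feasible choice of state fractions $\{\lambda_i\}$ and correlation coefficients $\rho_5,\rho_6$.

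Next I would eliminate $\rho_5$ and $\rho_6$ exactly as before: in (\ref{rho56-0}) use $\mathcal{C}((1-\rho_5^2)\gamma_3) \le \mathcal{C}(\gamma_3)$ and $\mathcal{C}((1-\rho_6^2)\gamma_3) \le \mathcal{C}(\gamma_3)$, and in (\ref{rho56-1}) use $\rho_5,\rho_6 \le 1$ so that $\gamma_2+\gamma_3+2\rho_5\sqrt{\gamma_2\gamma_3} \le (\sqrt{\gamma_2}+\sqrt{\gamma_3})^2$ and similarly for the $\rho_6$ term. These substitutions produce precisely the four inequalities appearing in (\ref{constraints2}), together with the trivial constraints $\sum_i \lambda_i \le 1$, $\lambda_i \ge 0$, and $R_a, R_b \ge 0$.

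Having established that every achievable $(R_a, R_b)$ is feasible for the linear program in (\ref{constraints2}) for some choice of $\{\lambda_i\}$, I would conclude that the weighted sum $R_a + kR_b$ is no larger than the maximum of the same linear objective over the feasible set, which is by definition $C_k$. Varying $k \ge 0$ and intersecting the resulting halfplanes $\{(R_a,R_b): R_a+kR_b \le C_k\}$ then yields the alternative outer bound.

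There is no real obstacle here: the key point that must be stated clearly is that $\rho_5,\rho_6$ need only be bounded loosely (using $\rho \le 1$) so that the resulting inequalities are \emph{linear} in $\{\lambda_i\}$, which is what allows the bound to be expressed as a linear program whose optimum yields a valid support-line of the capacity region. The only substantive difference from Theorem \ref{th1} is the objective: maximizing $R_a + kR_b$ jointly over $(R_a,R_b,\{\lambda_i\})$ traces support-hyperplanes of the outer region rather than extremal rays through the origin, so the two formulations describe the same convex outer region in two different ways.
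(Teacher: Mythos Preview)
Your proposal is correct and matches the paper's approach: the paper gives no separate proof of Theorem~\ref{th3} at all, merely remarking that the constraints are identical to those of Theorem~\ref{th1} and that varying $k$ produces a family of lines describing an equivalent outer bound. Your write-up makes explicit what the paper leaves implicit---that the four cut-set inequalities were derived independently of the tie $R_a=kR_b$ and hence constrain any achievable pair---which is exactly the observation needed, so there is nothing to add.
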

For each $k$, we get a line. As $k$ is varied, we obtain a set of
lines that define an outer bound to the rate region. This
outer bound is equivalent to the one obtained in Theorem 1.

\section{Comparisons}
Now, we compare the outer bound with the achievable rate regions of various protocols. We consider the following two classes of protocols: (1) Simple DF protocols with no side information across states, (2) protocols with side information across states. In addition to existing protocols, we also introduce a 6-state DF protocol with no side information across states and compare its achievable rate region with other protocols. This provides further insight on the importance of using all states and on the use of side information. We briefly describe the achievable rate regions here and then provide numerical results. 

\subsection{DF protocols with no side information}
\label{simpleDF}

\subsubsection{MABC protocol}
The 3-node half-duplex relay network has six useful states as shown in
Fig. \ref{states}. The MABC protocol uses states
3 and 4. The achievable rate region for MABC, as been obtained in
\cite{kim2008performance}, is described by the following inequalities:
\begin{equation}
 \begin{split}
  R_a &\leq \min\{\lambda_3\mathcal{C}(\gamma_1),\lambda_4\mathcal{C}(\gamma_2)\},\\
  R_b &\leq \min\{\lambda_3\mathcal{C}(\gamma_2),\lambda_4\mathcal{C}(\gamma_1)\},\\
  R_a&+R_b \leq \lambda_3\mathcal{C}(\gamma_1+\gamma_2).
 \end{split}
\end{equation}

\subsubsection{6-state DF Protocol}
The proposed 6-state DF protocol uses all the 6 useful states. 
To optimize the fractions $\lambda_i$, $i=1,2,\ldots,6$, we proceed as follows. 
Let $Z_{kl}^i$ denote the rate of information flow (bits per channel use) from node
$k$ to node $l$ in network state $i$. The flow rates $Z_{kl}^i$ are constrained by the topology of the network state and the coding employed.  States 3, 5 and 6
form Multiple Access (MAC) channels, while states 1 and 2 are Broadcast Channels (BCs). For these states, we employ standard capacity-achieving coding methods \cite{cover:book91}. In state 4,
the relay node $r$ broadcasts the information it received from nodes $a$ and $b$
in previous states. Thus, in state 4, each
receiver already knows the message that it transmitted to the relay,
i.e., state 4 is a BC with two receivers knowing the message intended for the other user {\it a priori}. The capacity region for state 4 was determined in \cite{wu2007broadcasting}, and we employ the coding scheme given there.

The achievable rate region of the half-duplex bidirectional relay channel with the 6-state DF protocol is the closure of the set of all points $(R_a,R_b)$ satisfying the following constraints:
\begin{equation*}
 \begin{split}
  R_a &=Z_{ar}^1+Z_{ab}^1+Z_{ab}^5+Z_{ar}^3,\\
  R_b &=Z_{br}^2+Z_{ba}^2+Z_{ba}^6+Z_{br}^3,\\
  \sum_{i=1}^6 & \lambda_i = 1, 0\leq \lambda_i \leq 1, 0 \leq \alpha_1,\alpha_2 \leq 1.\\
 \end{split}
\end{equation*}

{\em State 1 rate constraints:}
\begin{equation*}
   Z_{ar}^1 \leq \lambda_1\mathcal{C}(\alpha_1\gamma_1), Z_{ab}^1 \leq \lambda_1\mathcal{C}\left(\frac{(1-\alpha_1)\gamma_3}{1+\alpha_1\gamma_3}\right).
\end{equation*}

{\em State 2 rate constraints:}
\begin{equation*}
   Z_{br}^2 \leq \lambda_2\mathcal{C}(\alpha_2\gamma_2),   Z_{ba}^2 \leq \lambda_2\mathcal{C}\left(\frac{(1-\alpha_2)\gamma_3}{1+\alpha_2\gamma_3}\right).
\end{equation*}

{\em State 3 rate constraints:}
\begin{equation*}
   Z_{ar}^3 \leq \lambda_3\mathcal{C}(\gamma_1),\;\;   Z_{br}^3 \leq \lambda_3\mathcal{C}(\gamma_2), Z_{ar}^3+Z_{br}^3 \leq \lambda_3\mathcal{C}\left(\gamma_1+\gamma_2\right).
\end{equation*}

{\em State 4 rate constraints:}
\begin{equation*}
   Z_{ra}^4 \leq \lambda_4\mathcal{C}(\gamma_1), \;\;  Z_{rb}^4 \leq \lambda_4\mathcal{C}(\gamma_2).\\
\end{equation*}

{\em State 5 rate constraints:}
\begin{equation*}
   Z_{rb}^5 \leq \lambda_5\mathcal{C}(\gamma_2), \;\;  Z_{ab}^5 \leq \lambda_5\mathcal{C}(\gamma_3), Z_{rb}^5+Z_{ab}^5 \leq \lambda_5\mathcal{C}\left(\gamma_2+\gamma_3\right).
\end{equation*}

{\em State 6 rate constraints:}
\begin{equation*}
   Z_{ra}^6 \leq \lambda_6\mathcal{C}(\gamma_1),\;\;  Z_{ba}^6 \leq \lambda_6\mathcal{C}(\gamma_3), Z_{ra}^6+Z_{ba}^6 \leq \lambda_6\mathcal{C}\left(\gamma_1+\gamma_3\right).
\end{equation*}

{\em Flow Constraints:}
Information received at node $r$ from node $a$ should be equal to the information forwarded from node $r$ to node $b$. Similarly in the other direction also. Thus, we get the following equality constraints:
\begin{equation*}
  Z_{ar}^1+Z_{ar}^3 = Z_{rb}^5+Z_{rb}^4,~~~  Z_{br}^2+Z_{br}^3 = Z_{ra}^6+Z_{ra}^4.
\end{equation*}
In these constraints, $\alpha_1$ is the fraction of power used for the message from $a$ to $r$ in state 1, and $\alpha_2$ is the fraction of power used for the message from $a$ to $r$ in state 2. Numerical evaluation of this achievable rate region of the 6-state DF protocol is done for some illustrative examples in the numerical results section to show the importance of the various states in achieving different parts of the rate region.

\subsection{Protocols with side information across states}
\subsubsection{TDBC and HBC protocols}
TDBC \cite{kim2008performance} is a three phase protocol in which
states 1, 2 and 4 are used. In first phase (state 1), $a$ transmits at
a rate equal to the capacity of link between $a$ and $r$. At this time
node $b$ listens to this transmission and uses this information as
{\em side information} for decoding after phase 3 (state 4). Phase 2 is
similar to phase 1 in which $b$ transmits at a rate equal to the
capacity of link between $b$ and $r$ and $a$ listens to this. At the
end of each of these phases, relay node $r$ decode the messages and does 
a binning operation on these messages. In phase 3, $r$ transmits this
binned information to $a$ and $b$. Since $a$ and $b$ know the message
meant for the other destination, relay node $r$ can use XOR of these
messages for broadcasting. 

HBC is a four phase protocol in which states 1, 2, 3 and 4 are
used. States 1, 2 and 4 are used in the same way as in TDBC. State 3,
where terminals $a$ and $b$ transmit simultaneously (MAC) to relay $r$,
is also added. These messages are decoded at the relay and forwarded
as such in state 4. In HBC, state 4 is used for forwarding binned
messages from state 1 and 2 as well as for forwarding messages
received in state 3. The HBC protocol is always better than TDBC and
MABC protocols since they are special cases of the HBC protocol.  The
use of {\em side information} from one phase in decoding during
another phase provides improvement in TDBC and HBC over MABC for some
channel conditions.

The achievable rate region for HBC protocol has been obtained in
\cite{kim2008performance}. It is the closure of the set of all points
$(R_a,R_b)$ satisfying following constraints.
\begin{equation}
 \begin{split}
  R_a &\leq \min\{(\lambda_1+\lambda_3)\mathcal{C}(\gamma_1), \lambda_1\mathcal{C}(\gamma_3)+\lambda_4\mathcal{C}(\gamma_2)\},\\
  R_b &\leq \min\{(\lambda_2+\lambda_3)\mathcal{C}(\gamma_2), \lambda_2\mathcal{C}(\gamma_3)+\lambda_4\mathcal{C}(\gamma_1)\},\\
  R_a&+R_b \leq \lambda_1\mathcal{C}(\gamma_1) + \lambda_2\mathcal{C}(\gamma_2) + \lambda_3\mathcal{C}(\gamma_1+\gamma_2),\\ 
  &\sum_{i=1}^4 \lambda_i = 1, 0\le \lambda_i\le 1.
 \end{split}
\end{equation}

\subsubsection{6-state protocol}
States 1, 2, 3 and 4 are used in the HBC protocol. A 6-state
protocol that uses all 6 states has been proposed in
\cite{GonYueWan11}. This protocol is similar to the HBC protocol except that MAC states 5 and 6 are used before the use of state 4. Here, we arrive at the same achievable
region as that of the 6-state protocol in a slightly different way
than in \cite{GonYueWan11}. States 1, 2, 3 and 4 are used as in the
HBC protocol. In states 5 and 6, we have MAC transmissions from
\{$a$,$r$\} and \{$b$,$r$\} to node $b$ and $a$, respectively, i.e., in
states 5 and 6, there is a direct transmission between the terminal
nodes $a$ and $b$, and forwarding of messages received at relay in
previous phases. In the HBC protocol, the direct link between $a$ and $b$ is only
used to obtain side information to decode the transmission from the
relay. In the 6-state protocol, the direct link is also used to send messages.

The achievable rate region of the 6-state protocol is now the
closure of the set of all points $(R_a,R_b)$ satisfying the following
constraints:
\begin{equation*}
 \begin{split}
  R_a \leq Z_{ab}^5+ \min\{&(\lambda_1+\lambda_3)\mathcal{C}(\gamma_1),\\
  &\lambda_1\mathcal{C}(\gamma_3)+\lambda_4\mathcal{C}(\gamma_2)+Z_{rb}^5\},\\
  R_b \leq Z_{ba}^6+ \min\{&(\lambda_2+\lambda_3)\mathcal{C}(\gamma_2),\\
  &\lambda_2\mathcal{C}(\gamma_3)+\lambda_4\mathcal{C}(\gamma_1)+Z_{ra}^6\},\\
  R_a+R_b \leq \lambda_1\mathcal{C}(\gamma_1)+Z_{ab}^5+&\lambda_2\mathcal{C}(\gamma_2)+Z_{ba}^6+\lambda_3\mathcal{C}(\gamma_1+\gamma_2),\\
  &\sum_{i=1}^6 \lambda_i = 1, 0\le \lambda_i\le 1.
 \end{split}
\end{equation*}

\begin{equation*}
\begin{split}
  Z_{ab}^5 &\leq \lambda_5\mathcal{C}(\gamma_3), \;\; Z_{rb}^5 \leq \lambda_5\mathcal{C}(\gamma_2),  Z_{ab}^5+Z_{rb}^5 \leq \lambda_5\mathcal{C}(\gamma_2+\gamma_3).\\
  Z_{ba}^6 &\leq \lambda_6\mathcal{C}(\gamma_3),\;\;  Z_{ra}^6 \leq \lambda_6\mathcal{C}(\gamma_1), Z_{ba}^6+Z_{ra}^6 \leq \lambda_6\mathcal{C}(\gamma_1+\gamma_3).\\
\end{split}
\end{equation*}
The above achievable rate region is obtained by modifying the achievable
rate region for HBC protocol to include the effect of states 5 and 6
as well. The main points are summarized here. (1) $Z_{ab}^5$
represents the direct information transmission from node $a$ to node
$b$ in state 5. Similarly $Z_{ba}^6$ represents the direct information
transmission from node $b$ to node $a$ in state 6. (2) The first term
inside min\{\} in the constraints for $R_a$ and $R_b$ correspond to
the flow from source to relay and the second term in the constraints
for $R_a$ and $R_b$ correspond to the flow from relay to
destination. For source to relay flow, this protocol also uses the
same states as HBC protocol. For forwarding from relay, this protocol
uses states 5 and 6 also. Thus, while the first terms inside min\{\}
remain the same as HBC, the second term has an additional term
corresponding to flow from relay in states 5 and 6 for rates $R_a$ and
$R_b$, respectively. (3) The bound on $R_a+R_b$ corresponds to the sum
of all information flow from both sources. Thus, in state 3, it is
bounded by the MAC capacity bound, and in other states it is bounded by
individual links. In state 4, there is no information flow from
source.

Now, we can observe that the choice of $Z_{ab}^5$, $Z_{rb}^5$, $Z_{ba}^6$, and $Z_{ra}^6$ that maximize the achievable rate region are: $Z_{ab}^5 + Z_{rb}^5 =  \lambda_2\mathcal{C}(\gamma_2+\gamma_3)$, $Z_{ba}^6+Z_{ra}^6 = \lambda_6\mathcal{C}(\gamma_1+\gamma_3)$, $  Z_{ab}^5 = \lambda_5\mathcal{C}(\gamma_3)$, and $Z_{ba}^6 = \lambda_6\mathcal{C}(\gamma_3)$. Substituting this choice, we get the same region as in \cite{GonYueWan11}.

\subsubsection{CoMABC protocol}
The achievable rate
region for CoMABC protocol is taken from \cite{tian2012asymmetric}. We
are assuming $\gamma_3\leq\gamma_1\leq\gamma_2$. Thus, states 3, 4 and
6 are used in this protocol. The constraints for the CoMABC
protocol are given by
\begin{equation}
 \begin{split}
  &R_a \leq \min\{\lambda_3 R_{ar}^*,\lambda_4 \mathcal{C}(\gamma_2)\},\\
  &R_b \leq \min\{\lambda_3 R_{br}^*+\lambda_6\mathcal{C}(\gamma_3),\lambda_4\mathcal{C}(\gamma_1)+\lambda_6\mathcal{C}(\gamma_1+\gamma_3)\},\\
  &R_{ar}^*=\left[\log\left(\frac{\gamma_1}{\gamma_1+\gamma_2}+\gamma_1\right)\right]^+, \\
&R_{br}^*=\left[\log\left(\frac{\gamma_2}{\gamma_1+\gamma_2}+\gamma_2\right)\right]^+.\\
 \end{split}
\end{equation}

\subsection{Numerical Results}
\subsubsection{Comparison with DF protocols}
\begin{figure}[htb]
\centering
\includegraphics[width=2.4in]{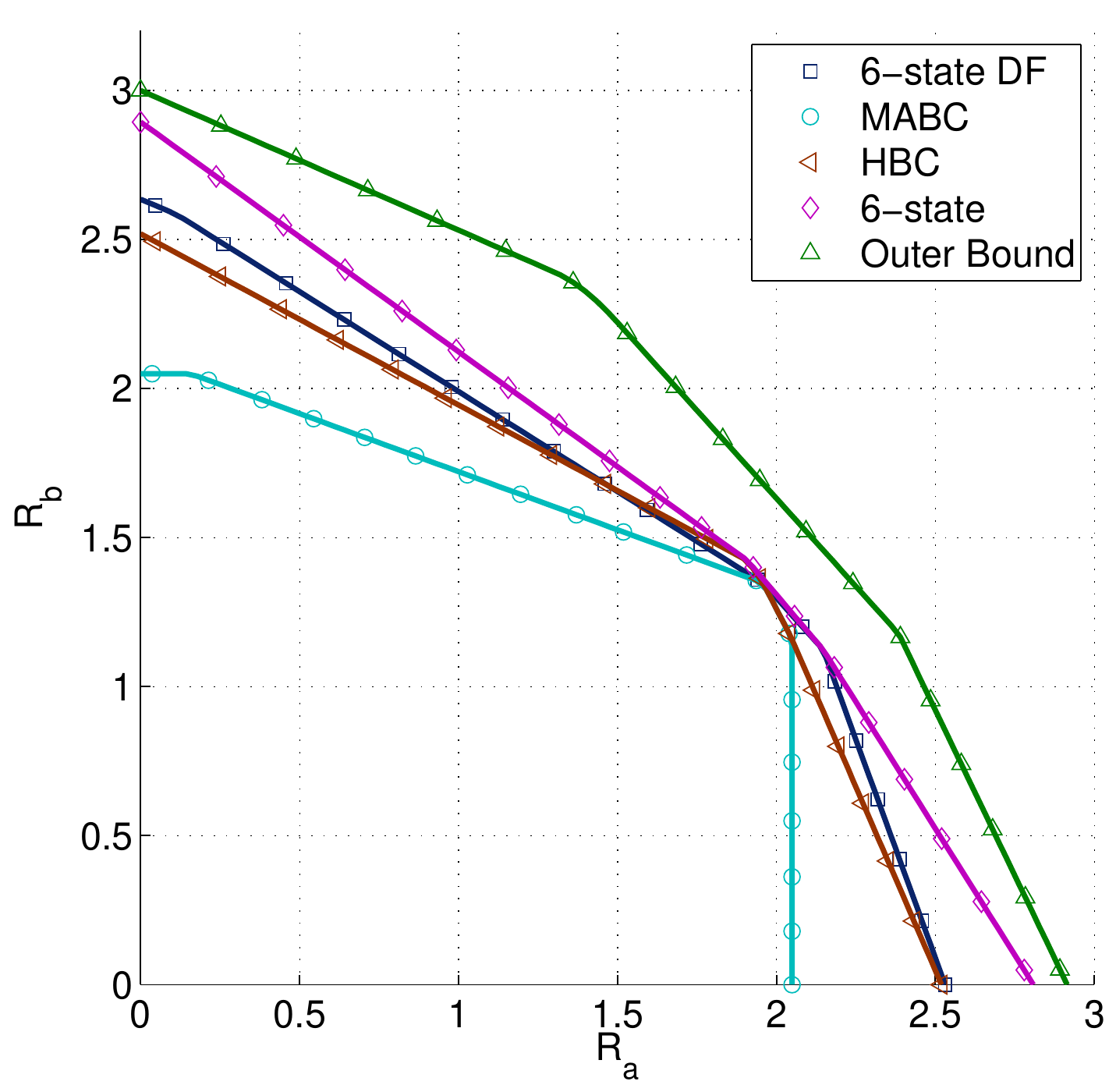}
\caption{Achievable rate region and Outer bound for various protocols:
  $\gamma_1 = 10$ dB, $\gamma_2 = 15$ dB, $\gamma_3 = 3$ dB}
\label{figure1}
\end{figure}
Figs. \ref{figure1} and \ref{figure2} show the comparison of different
DF protocols with the outer bound. The 6-state DF and HBC protocols are always
better than MABC. The 6-state DF protocol also achieves several rate pairs
that the HBC protocol cannot achieve even though the HBC protocol uses
side information across states. The HBC protocol also achieves some
rate pairs that the 6-state DF protocol cannot achieve (see Fig. \ref{figure2}
for small $R_b$). The 6-state protocol achieves a larger rate region
than all other protocols as expected since it uses all states as well
as the side information used in HBC protocol. The achievable rate
region of the 6-state protocol is closer to the outer bound in
Fig. \ref{figure1} where the SNRs are higher than in
Fig. \ref{figure2}.
\begin{figure}[ht]
\centering
\includegraphics[width=2.4in]{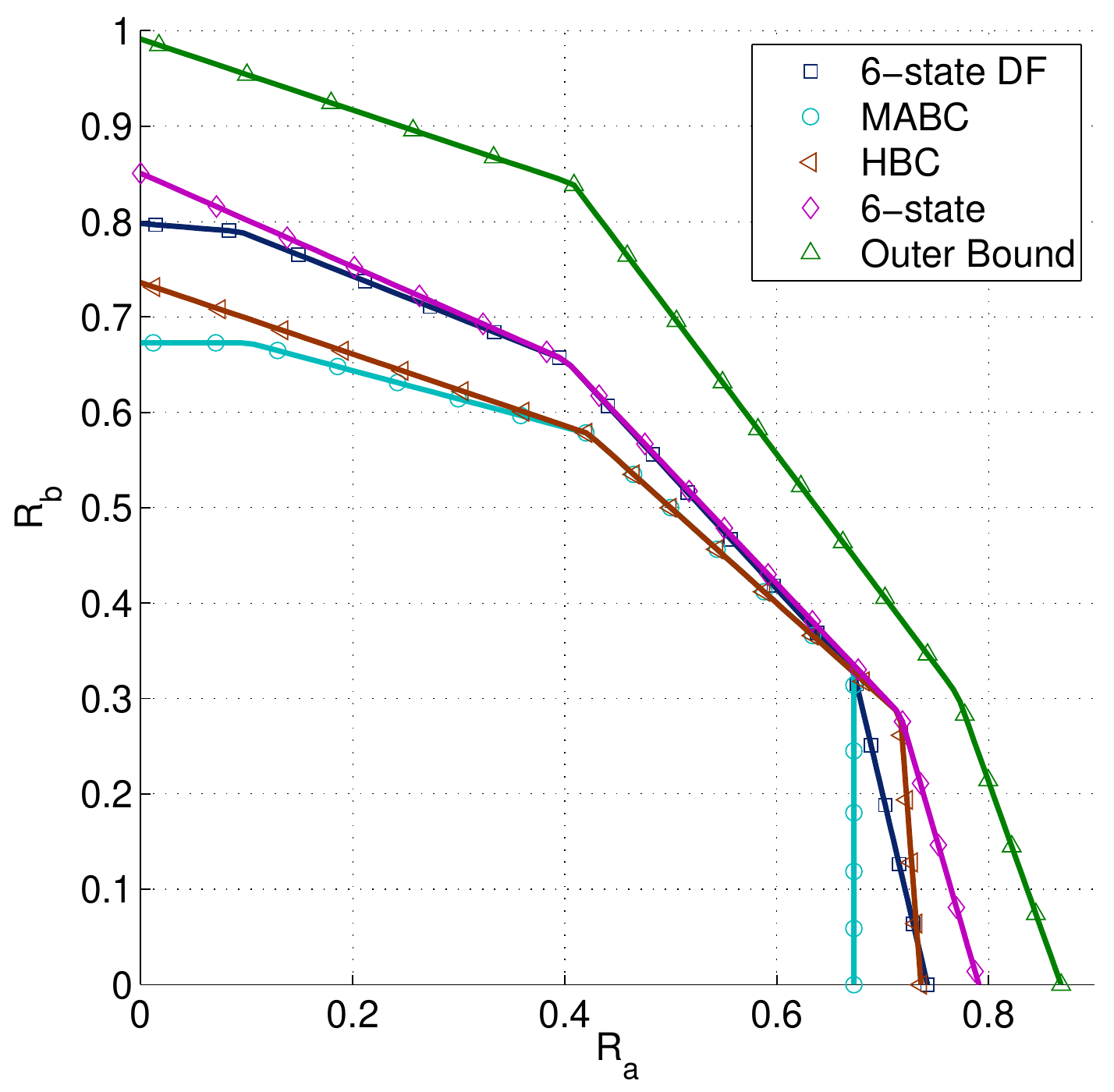}
\caption{Achievable rate region and Outer bound for various protocols:
  $\gamma_1 = 0$ dB, $\gamma_2 = 5$ dB, $\gamma_3 = -7$ dB}
\label{figure2}
\end{figure}

\subsubsection{Comparison with CoMABC protocol}
Figs. \ref{figure3}, \ref{figure5} and \ref{figure4} show the
comparison of the 6-state DF and 6-state protocols with the CoMABC
protocol. As mentioned earlier, CoMABC is not a DF protocol and the
relay forwards a estimated function of the two messages. Thus, it has
some advantage over DF protocols. However, the CoMABC protocol does
not use all states. Therefore, it does not achieve some rate pairs
that 6-state or 6-state DF can achieve. The CoMABC protocol is optimized to
maximize sum rate. Therefore, it performs well near the maximum sum
rate points. States 5 and 6 used in the 6-state DF and 6-state protocol are
very useful for achieving asymmetric rates. 
Fig. \ref{figure5} considers a high SNR scenario where $\gamma_1 = 30$
dB, $\gamma_2 = 35$ dB, and $\gamma_3 = 13$ dB. In this case, the
CoMABC protocol almost achieves the outer bound for symmetric rates,
while the 6-state protocol is close to the outer bound for asymmetric
rates. Fig. \ref{figure4} considers a low SNR scenario where $\gamma_1 = 0$
dB, $\gamma_2 = 5$ dB, and $\gamma_3 = -7$ dB. In this case, the
CoMABC protocol is worse than the 6-state protocol except near $R_a=0$. 
\begin{figure}[htb]
\centering
\includegraphics[width=2.4in]{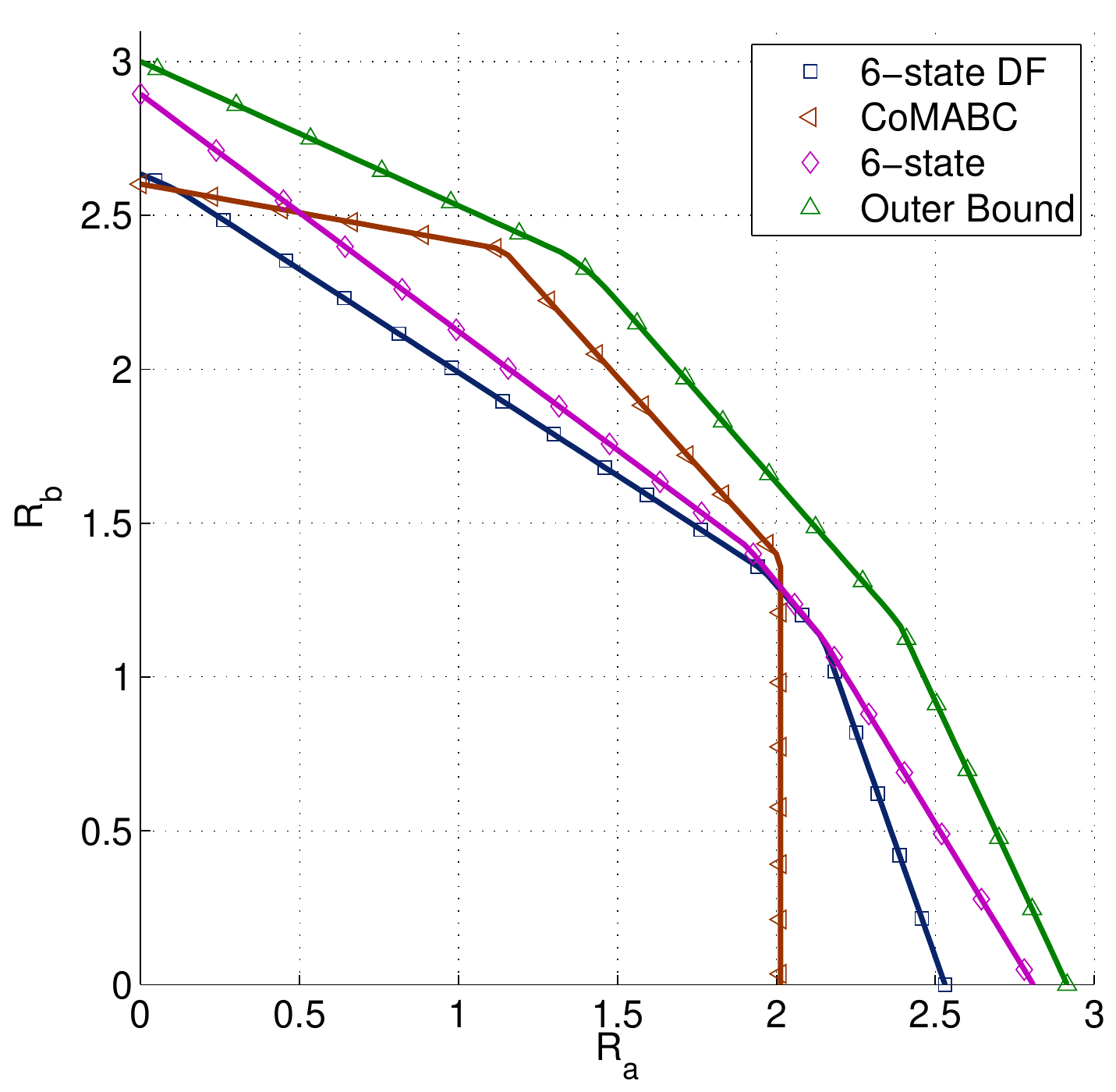}
\caption{Comparison with CoMABC: $\gamma_1 = 10$ dB, $\gamma_2 = 15$ dB, $\gamma_3 = 3$ dB}
\label{figure3}
\end{figure}
\begin{figure}[htb]
\centering
\includegraphics[width=2.4in]{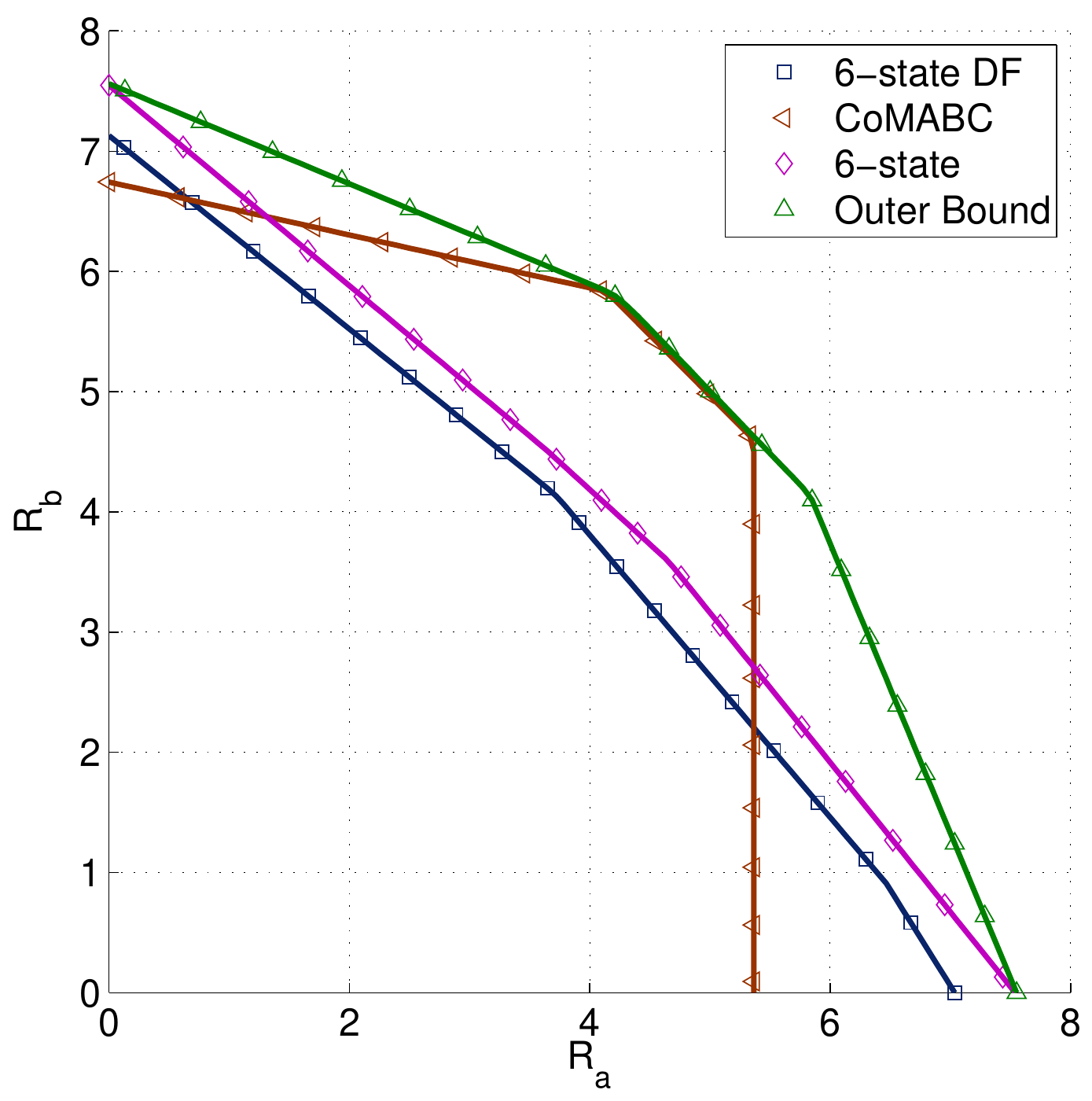}
\caption{Comparison with CoMABC: $\gamma_1 = 30$ dB, $\gamma_2 = 35$ dB, $\gamma_3 = 13$ dB}
\label{figure5}
\end{figure}
\begin{figure}[htb]
\centering
\includegraphics[width=2.4in]{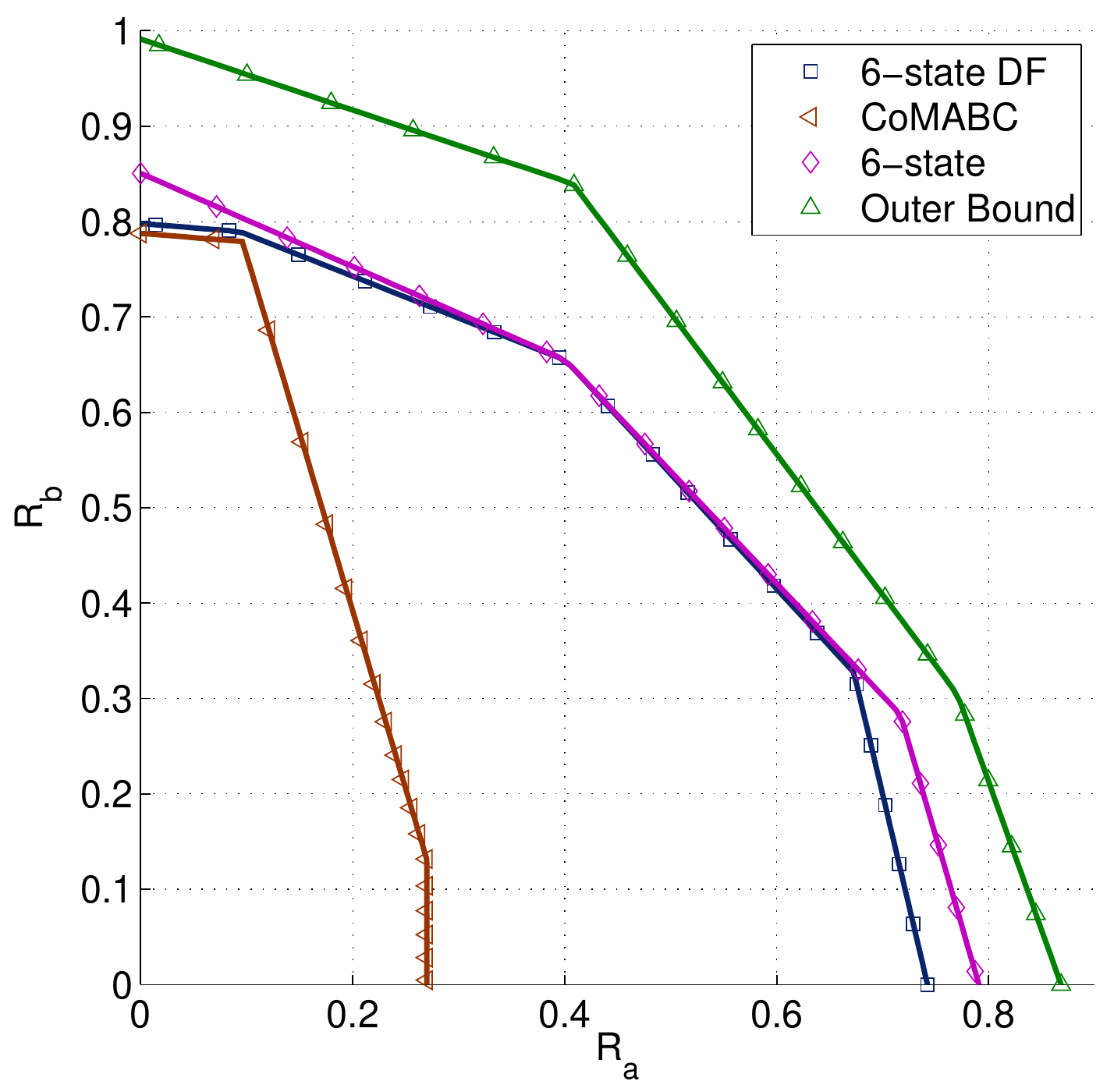}
\caption{Comparison with CoMABC: $\gamma_1 = 0$ dB, $\gamma_2 = 5$ dB, $\gamma_3 = -7$ dB}
\label{figure4}
\end{figure}

\vspace*{-1mm}
\section{Analytical Outer Bounds}
The outer bound in Theorem \ref{th1} can be evaluated by solving a linear program for each $k$. In this section, we derive an analytical expression that is an outer bound by analyzing the dual of the linear programs in Theorems \ref{th1} and \ref{th3}. These outer bounds are easier to evaluate and can also be used to gain a better understanding of the bound. 

\vspace*{-3mm}
\subsection{Outer Bound}
The dual of the linear program in Theorem \ref{th1} is: $\displaystyle{\min_{\{y_i\}} y_5}$
\begin{equation}
{\allowdisplaybreaks
 \begin{split}
\mbox{subject to~}
  y_5 &\geq y_1 \mathcal{C}(\gamma_1+\gamma_3)+ y_2 \mathcal{C}\left(\gamma_3\right),\\
  y_5 &\geq y_3 \mathcal{C}(\gamma_2+\gamma_3)+ y_4 \mathcal{C}\left(\gamma_3\right),\\
  y_5 &\geq y_1 \mathcal{C}(\gamma_1)+ y_3 \mathcal{C}\left(\gamma_2\right),\\
  y_5 &\geq y_2 \mathcal{C}(\gamma_2)+ y_4 \mathcal{C}\left(\gamma_1\right),\\
  y_5 &\geq y_1 \mathcal{C}(\gamma_3)+ y_2 \mathcal{C}\left((\sqrt{\gamma_2} + \sqrt{\gamma_3})^2\right),\\
  y_5 &\geq y_3 \mathcal{C}(\gamma_3)+ y_4 \mathcal{C}\left((\sqrt{\gamma_1} + \sqrt{\gamma_3})^2\right),\\
  &ky_1 + ky_2 + y_3 + y_4 \geq 1.\\
 \end{split}}
\label{constraintsd}
\end{equation}

\begin{theorem}
For any $k \geq 1$, $R_b$ is upperbounded as 
\[
R_b \le \max\{T_1, T_2, T_3, T_4\},
\]
\[
\mbox{where~~}
T_1 = \frac{3k-1}{2k^2}\frac{\mathcal{C}(\gamma_1)\mathcal{C}(\gamma_2)}{\mathcal{C}(\gamma_1) + \mathcal{C}(\gamma_2)},
\]
\[
T_2 = \frac{2k-1}{2k^2}\frac{\mathcal{C}(\gamma_2)\mathcal{C}(\gamma_1 + \gamma_3) + \mathcal{C}(\gamma_1)\mathcal{C}(\gamma_3)}{\mathcal{C}(\gamma_1) + \mathcal{C}(\gamma_2)},
\]
\[
T_3 = \frac{2k-1}{2k^2}\frac{\mathcal{C}(\gamma_2)\mathcal{C}(\gamma_3) + \mathcal{C}(\gamma_1)\mathcal{C}((\sqrt{\gamma_2} + \sqrt{\gamma_3})^2)}{\mathcal{C}(\gamma_1) + \mathcal{C}(\gamma_2)},
\]
\[
T_4 = \frac{1}{2k}\frac{\mathcal{C}(\gamma_1)\mathcal{C}(\gamma_3) + \mathcal{C}(\gamma_2)\mathcal{C}((\sqrt{\gamma_1} + \sqrt{\gamma_3})^2)}{\mathcal{C}(\gamma_1) + \mathcal{C}(\gamma_2)}.
\]
\label{thdual}
\end{theorem}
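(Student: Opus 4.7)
The plan is to appeal to linear-programming duality applied to the program of Theorem~\ref{th1}. By strong duality, $C_{bk}$ equals the optimum of the dual~(\ref{constraintsd}), so by weak duality any feasible $(y_1,y_2,y_3,y_4,y_5)$ gives $R_b \le C_{bk} \le y_5$. The goal is therefore to exhibit a single feasible dual point whose objective value $y_5$ equals $\max\{T_1,T_2,T_3,T_4\}$; the theorem then follows immediately.

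First I would reduce the number of free parameters via a symmetric ansatz, for instance $y_1=y_4=\alpha$ and $y_2=y_3=\beta$, which by inspection makes dual constraints 3 and 4 of~(\ref{constraintsd}) identical and pairs 1 with 2 and 5 with 6 under the swap $\gamma_1\leftrightarrow\gamma_2$. Imposing the last dual constraint with equality then yields $(k+1)(\alpha+\beta)=1$, leaving a single free parameter. I would then write down the distinct right-hand sides of the six $y_5$-constraints under this ansatz, substitute the normalized value, and identify each of the four resulting expressions as one of $T_1,T_2,T_3,T_4$ after pulling out a common factor of $1/(\mathcal{C}(\gamma_1)+\mathcal{C}(\gamma_2))$. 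An equivalent and perhaps cleaner route, motivated by the earlier observation that at most four states are active at the primal optimum, is to enumerate the $\binom{6}{4}=15$ choices of four binding inequalities in~(\ref{constraintsd}); only four of these give a sign-consistent system with nonnegative $y_i$, and their values of $y_5$ turn out to be precisely $T_1,\ldots,T_4$.

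The principal obstacle will be the algebraic bookkeeping required to verify feasibility of the final dual solution, i.e.\ that all six inequalities of~(\ref{constraintsd}) are actually satisfied when $y_5$ is set to $\max\{T_1,T_2,T_3,T_4\}$. This verification relies on the hypothesis $k\ge 1$---which is exactly what makes the coefficients $(3k-1)/(2k^2)$, $(2k-1)/(2k^2)$ and $1/(2k)$ nonnegative and guarantees that the constructed $y_i$ are nonnegative---and on the channel ordering $\gamma_3\le\gamma_1\le\gamma_2$, used to compare quantities such as $\mathcal{C}((\sqrt{\gamma_1}+\sqrt{\gamma_3})^2)$ with $\mathcal{C}((\sqrt{\gamma_2}+\sqrt{\gamma_3})^2)$ and $\mathcal{C}(\gamma_1+\gamma_3)$ with $\mathcal{C}(\gamma_2+\gamma_3)$. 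Once feasibility is checked, weak duality closes the argument and yields $R_b\le\max\{T_1,T_2,T_3,T_4\}$.
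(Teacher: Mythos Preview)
Your overall strategy---exhibit a feasible point of the dual~(\ref{constraintsd}) and invoke weak duality---is exactly what the paper does. The gap is in the specific ansatz. Setting $y_1=y_4=\alpha$ and $y_2=y_3=\beta$ forces, via the last dual constraint with equality, $\alpha+\beta=\tfrac{1}{k+1}$. But then none of the right-hand sides of~(\ref{constraintsd}) can reproduce the $k$-dependent prefactors $\tfrac{2k-1}{2k^2}$, $\tfrac{3k-1}{2k^2}$, $\tfrac{1}{2k}$ that appear in $T_1,\ldots,T_4$ unless $k=1$. Concretely, to make the first dual constraint evaluate to $T_2$ you would need $\alpha=\tfrac{2k-1}{2k^2}\,\tfrac{\mathcal{C}(\gamma_2)}{\mathcal{C}(\gamma_1)+\mathcal{C}(\gamma_2)}$, while to make the sixth constraint evaluate to $T_4$ you would need $\alpha=\tfrac{1}{2k}\,\tfrac{\mathcal{C}(\gamma_2)}{\mathcal{C}(\gamma_1)+\mathcal{C}(\gamma_2)}$; these coincide only at $k=1$. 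The symmetry you are imposing is incompatible with the asymmetry $R_a=kR_b$ built into the primal for $k>1$.

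The paper's choice breaks exactly this symmetry: it takes
\[
y_1=\tfrac{2k-1}{2k^2}\,\tfrac{\mathcal{C}(\gamma_2)}{\mathcal{C}(\gamma_1)+\mathcal{C}(\gamma_2)},\quad
y_2=\tfrac{2k-1}{2k^2}\,\tfrac{\mathcal{C}(\gamma_1)}{\mathcal{C}(\gamma_1)+\mathcal{C}(\gamma_2)},\quad
y_3=\tfrac{1}{2k}\,\tfrac{\mathcal{C}(\gamma_1)}{\mathcal{C}(\gamma_1)+\mathcal{C}(\gamma_2)},\quad
y_4=\tfrac{1}{2k}\,\tfrac{\mathcal{C}(\gamma_2)}{\mathcal{C}(\gamma_1)+\mathcal{C}(\gamma_2)},
\]
so that $(y_1,y_2)$ and $(y_3,y_4)$ carry \emph{different} $k$-dependent scalings. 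With this choice one checks directly that $ky_1+ky_2+y_3+y_4=1$, and that the six right-hand sides of~(\ref{constraintsd}) evaluate to $T_2$, a term dominated by $\max\{T_1,\ldots,T_4\}$, $T_1$, $T_1$, $T_3$, and $T_4$ respectively; setting $y_5=\max\{T_1,T_2,T_3,T_4\}$ is then feasible and gives the bound. Your ``cleaner route'' via enumerating four binding constraints is a plausible heuristic for discovering this point, but note that the resulting point is not the dual optimum in general (the paper's bound is looser than the LP bound of Theorem~\ref{th1}), so complementary slackness alone will not pin it down; you still have to guess the right $k$-dependent split between $(y_1,y_2)$ and $(y_3,y_4)$.
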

\begin{proof}
The value of the dual program at any feasible point is an upper bound on the value of the primal problem. Choosing an appropriate feasible point in the dual program provides a good upper bound on $R_b$. The bound on $R_b$ is obtained by choosing the following:
\[
y_1 = \frac{2k-1}{2k^2}\frac{\mathcal{C}(\gamma_2)}{\mathcal{C}(\gamma_1) + \mathcal{C}(\gamma_2)}; \;\;\;\;\; y_2 = \frac{2k-1}{2k^2}\frac{\mathcal{C}(\gamma_1)}{\mathcal{C}(\gamma_1) + \mathcal{C}(\gamma_2)};
\]
\[
y_3 = \frac{1}{2k}\frac{\mathcal{C}(\gamma_1)}{\mathcal{C}(\gamma_1) + \mathcal{C}(\gamma_2)}; \;\;\;\;\; y_4 = \frac{1}{2k}\frac{\mathcal{C}(\gamma_2)}{\mathcal{C}(\gamma_1) + \mathcal{C}(\gamma_2)}
\]
\end{proof}

\begin{corollary}(1) For $k = 1$, $T_2 \le T_4$. Therefore, $R_b \le \max\{T_1, T_3, T_4\}$. (2) For $\gamma_1 = \gamma_2 = \gamma$ and $k = 1$, we get 
\begin{equation}
R_b \le \max\left\{ \frac{\mathcal{C}(\gamma)}{2}, \frac{1}{4}\left[\mathcal{C}(\gamma_3) + \mathcal{C}((\sqrt{\gamma} + \sqrt{\gamma_3})^2) \right]\right\}.
\label{bound1k1}
\end{equation}
\label{corol1}
\end{corollary}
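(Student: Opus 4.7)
The plan is to specialize the general expressions $T_1, T_2, T_3, T_4$ from Theorem \ref{thdual} to $k=1$ (and then also to $\gamma_1 = \gamma_2$) and do straightforward algebraic comparisons. No new inequalities about mutual information are needed; everything reduces to monotonicity of $\mathcal{C}(\cdot)$ and simple substitution.

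For part (1), I plan to write out $T_2$ and $T_4$ at $k=1$, which gives common prefactors of $\tfrac{1}{2(\mathcal{C}(\gamma_1) + \mathcal{C}(\gamma_2))}$. After subtracting, the term $\mathcal{C}(\gamma_1)\mathcal{C}(\gamma_3)$ cancels on both sides, so the inequality $T_2 \le T_4$ collapses to
\[
\mathcal{C}(\gamma_2)\,\mathcal{C}(\gamma_1+\gamma_3) \;\le\; \mathcal{C}(\gamma_2)\,\mathcal{C}\bigl((\sqrt{\gamma_1}+\sqrt{\gamma_3})^2\bigr),
\]
and since $\mathcal{C}(\gamma_2) > 0$, this further reduces to $\mathcal{C}(\gamma_1+\gamma_3) \le \mathcal{C}((\sqrt{\gamma_1}+\sqrt{\gamma_3})^2)$. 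The final step is to note $(\sqrt{\gamma_1}+\sqrt{\gamma_3})^2 = \gamma_1 + \gamma_3 + 2\sqrt{\gamma_1\gamma_3} \ge \gamma_1 + \gamma_3$, and invoke monotonicity of $\mathcal{C}(x) = \log_2(1+x)$. Once $T_2 \le T_4$ is established, the conclusion $R_b \le \max\{T_1, T_3, T_4\}$ is immediate from the max in Theorem \ref{thdual}.

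For part (2), I plan to substitute $\gamma_1 = \gamma_2 = \gamma$ and $k=1$ directly. Then $T_1$ simplifies to $\mathcal{C}(\gamma)^2/(2\mathcal{C}(\gamma)) = \mathcal{C}(\gamma)/2$. For $T_3$ and $T_4$, with $\gamma_1 = \gamma_2$ the denominator is $2\mathcal{C}(\gamma)$ and the numerators of $T_3$ and $T_4$ become identical (both equal $\mathcal{C}(\gamma)[\mathcal{C}(\gamma_3) + \mathcal{C}((\sqrt{\gamma}+\sqrt{\gamma_3})^2)]$), yielding $T_3 = T_4 = \tfrac{1}{4}[\mathcal{C}(\gamma_3) + \mathcal{C}((\sqrt{\gamma}+\sqrt{\gamma_3})^2)]$. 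Then (\ref{bound1k1}) follows from part (1).

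There is essentially no hard part here: the proof is mechanical. The only thing worth being careful about is ordering the substitution so that the common prefactor in each $T_i$ is factored out cleanly before cancellations, and verifying that the symmetry $\gamma_1=\gamma_2$ really does make $T_3$ and $T_4$ coincide (it does, because swapping the roles of $\gamma_1$ and $\gamma_2$ swaps $T_3 \leftrightarrow T_4$, and equal $\gamma$'s make the two expressions identical term by term).
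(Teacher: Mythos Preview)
Your proposal is correct and is exactly the intended argument: the paper states the corollary without proof, treating it as an immediate consequence of Theorem~\ref{thdual} by direct substitution of $k=1$ (and then $\gamma_1=\gamma_2$) together with the elementary inequality $(\sqrt{\gamma_1}+\sqrt{\gamma_3})^2\ge\gamma_1+\gamma_3$ and monotonicity of $\mathcal{C}(\cdot)$. Your write-up simply makes those steps explicit.
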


\begin{remark} A result similar to Theorem \ref{thdual} can be obtained for $k < 1$ as well. In this case, we can set $R_b = k'R_a$, where $k' > 1$ and use the same technique as in Theorem \ref{thdual}. The expressions obtained for $T_1$ to $T_4$ are similar to Theorem \ref{thdual} except that $\gamma_1$ and $\gamma_2$ are interchanged in each expression.
\end{remark}

\begin{remark}
$R_b$ is also upper bounded by the upper bound for one-way relaying from $b$ to $a$ with $R_a = 0$, i.e., we have
\[
R_b \le \frac{\mathcal{C}(\gamma_1 + \gamma_3)\mathcal{C}((\sqrt{\gamma_2} + \sqrt{\gamma_3})^2) - \mathcal{C}^2(\gamma_3)}{\mathcal{C}(\gamma_1 + \gamma_3) + \mathcal{C}((\sqrt{\gamma_2} + \sqrt{\gamma_3})^2) - 2\mathcal{C}(\gamma_3)}.\]
This bound is obtained by solving the dual program for one-way relaying. When there is no direct link, i.e., $\gamma_3 = 0$, this reduces to 
$R_b \le \frac{\mathcal{C}(\gamma_1)\mathcal{C}(\gamma_2)}{\mathcal{C}(\gamma_1)+\mathcal{C}(\gamma_2)},$
where the bound is the capacity of the half-duplex two-hop linear network \cite{khojastepour2003capacity}.
\end{remark}

\subsection{Capacity Results}
\begin{theorem}
(1) For $\gamma_1 = \gamma_2 = \gamma$ and $k = 1$, the upper bound on $R_a$ is $\mathcal{C}(\gamma)/2$ for $\gamma_3 \le \gamma_{30}$, where $\gamma_{30}$ satisfies
$ f(\gamma_{30}) = 2 \mathcal{C}(\gamma)$, $f(\gamma_3) \stackrel{\triangle}{=} \mathcal{C}(\gamma_3) + \mathcal{C}((\sqrt{\gamma} + \sqrt{\gamma_3})^2).$\\
(2) For $\gamma_1 \ne \gamma_2$ and $k = 1$, the upper bound on $R_a$ is $\mathcal{C}(\gamma_1)\mathcal{C}(\gamma_2)/(\mathcal{C}(\gamma_1) + \mathcal{C}(\gamma_2))$ for $\gamma_3 \le \min(\gamma_{31}, \gamma_{32})$, where $\gamma_{31}$, $\gamma_{32}$ satisfy 
$ f_1(\gamma_{31}) = f_2(\gamma_{32}) =  2\mathcal{C}(\gamma_1)\mathcal{C}(\gamma_2)$ where $f_1(\gamma_3) \stackrel{\triangle}{=} \mathcal{C}(\gamma_2)\mathcal{C}(\gamma_3) + \mathcal{C}(\gamma_1)\mathcal{C}((\sqrt{\gamma_2} + \sqrt{\gamma_3})^2)$ and $f_2(\gamma_3) \stackrel{\triangle}{=} \mathcal{C}(\gamma_1)\mathcal{C}(\gamma_3) + \mathcal{C}(\gamma_2)\mathcal{C}((\sqrt{\gamma_1} + \sqrt{\gamma_3})^2)$.
\label{captheorem}
\end{theorem}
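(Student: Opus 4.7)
The plan is to specialize Corollary \ref{corol1} (for part (1)) and Theorem \ref{thdual} (for part (2)) to $k = 1$, and then identify the regime of $\gamma_3$ in which a single term of the outer maximum dominates all the others. Both of those results upper bound $R_b$ under the constraint $R_a = k R_b$, so setting $k = 1$ forces $R_a = R_b$ and the same bound applies to $R_a$.

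For part (1), Corollary \ref{corol1}(2) immediately gives
\[
R_a \le \max\left\{\tfrac{1}{2}\mathcal{C}(\gamma),\; \tfrac{1}{4}f(\gamma_3)\right\}.
\]
I would then invoke strict monotonicity of $f(\gamma_3) = \mathcal{C}(\gamma_3) + \mathcal{C}((\sqrt{\gamma} + \sqrt{\gamma_3})^2)$ in $\gamma_3$ (each summand is strictly increasing), noting that $f(0) = \mathcal{C}(\gamma) < 2\mathcal{C}(\gamma)$ and $f \to \infty$ as $\gamma_3 \to \infty$. This guarantees a unique $\gamma_{30}$ solving $f(\gamma_{30}) = 2\mathcal{C}(\gamma)$, and $\tfrac{1}{4}f(\gamma_3) \le \tfrac{1}{2}\mathcal{C}(\gamma)$ holds precisely on $[0, \gamma_{30}]$. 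Under this condition the maximum collapses to $\mathcal{C}(\gamma)/2$.

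For part (2), I would apply Theorem \ref{thdual} at $k = 1$, which gives $R_a \le \max\{T_1, T_2, T_3, T_4\}$ with leading coefficient $(3k-1)/(2k^2) = 1$ in $T_1$, so $T_1 = \mathcal{C}(\gamma_1)\mathcal{C}(\gamma_2)/(\mathcal{C}(\gamma_1) + \mathcal{C}(\gamma_2))$. By Corollary \ref{corol1}(1), $T_2 \le T_4$, so only $T_1 \ge T_3$ and $T_1 \ge T_4$ need to be established. Clearing the common positive factor $2(\mathcal{C}(\gamma_1) + \mathcal{C}(\gamma_2))$, these two inequalities reduce exactly to $f_1(\gamma_3) \le 2\mathcal{C}(\gamma_1)\mathcal{C}(\gamma_2)$ and $f_2(\gamma_3) \le 2\mathcal{C}(\gamma_1)\mathcal{C}(\gamma_2)$, respectively, with $f_1, f_2$ as in the statement. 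Both $f_1$ and $f_2$ are continuous and strictly increasing in $\gamma_3$ (each $\mathcal{C}(\cdot)$ factor is increasing and the coefficients $\mathcal{C}(\gamma_1), \mathcal{C}(\gamma_2)$ are positive), they equal $\mathcal{C}(\gamma_1)\mathcal{C}(\gamma_2) < 2\mathcal{C}(\gamma_1)\mathcal{C}(\gamma_2)$ at $\gamma_3 = 0$, and diverge as $\gamma_3 \to \infty$; hence the thresholds $\gamma_{31}, \gamma_{32}$ are uniquely defined and both inequalities hold iff $\gamma_3 \le \min(\gamma_{31}, \gamma_{32})$. In that regime the outer maximum reduces to $T_1$.

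I do not anticipate a serious obstacle: the entire argument is monotonicity plus elementary algebraic rearrangement. The only nontrivial input is Corollary \ref{corol1}(1), which lets me suppress $T_2$ from consideration; without it the clean two-threshold statement of part (2) would instead need a third threshold corresponding to $T_1 \ge T_2$.
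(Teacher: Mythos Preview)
Your proposal is correct and follows essentially the same approach as the paper: invoke Corollary~\ref{corol1} at $k=1$, then use monotonicity of $f$, $f_1$, $f_2$ in $\gamma_3$ to show the thresholds are well-defined and that the maximum collapses to the first term below them. Your write-up is in fact more explicit than the paper's (you spell out the values at $\gamma_3=0$, the divergence, and the algebraic reduction of $T_1\ge T_3$, $T_1\ge T_4$ to the $f_i$ inequalities), but the underlying argument is identical.
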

\begin{proof}
Note that $f(\cdot)$, $f_1(\cdot)$, and $f_2(\cdot)$ are monotonically increasing functions of $\gamma_3$ for a given $\gamma_1$, $\gamma_2$. Therefore: (a) unique solutions exist for $\gamma_{30}$, $\gamma_{31}$, and $\gamma_{32}$, and (b) the resulting $\max$ term in each upper bound in Corollary \ref{corol1} is the first term. 
\end{proof}

In the $\gamma_1 = \gamma_2 = \gamma$, $k=1$ case, the above result means that the upper bound is $\mathcal{C}(\gamma)/2$. The lattice coding scheme in \cite{NamChuLee10} can achieve rates within 0.5 bits of this upper bound without any direct link. Therefore, for the weak direct link regime specified by $\gamma_{30}$, $\mathcal{C}(\gamma)/2$ is the capacity within 0.5 bits and can be achieved without using the direct link. In the $\gamma_1 \ne \gamma_2$, $k=1$ case, the bound reduces to $\mathcal{C}(\gamma_1)\mathcal{C}(\gamma_2)/(\mathcal{C}(\gamma_1) + \mathcal{C}(\gamma_2))$ when the direct link is weak. This is equal to the one-way relaying bound when $\gamma_1 \ne \gamma_2$ without the direct link. 

\subsection{Analytical Upper Bound for Weighted Sum Rate}
\begin{theorem}
For any $k$ , $kR_a + R_b$ is upperbounded as 
\[
kR_a + R_b \le \max\{T_1, T_2, T_3, T_4\},
\]
\[
\mbox{where~~}
T_1 = k \frac{\mathcal{C}(\gamma_1 + \gamma_3)\mathcal{C}((\sqrt{\gamma_2} + \sqrt{\gamma_3})^2) - \mathcal{C}^2(\gamma_3)}{\mathcal{C}(\gamma_1 + \gamma_3) + \mathcal{C}((\sqrt{\gamma_2} + \sqrt{\gamma_3})^2) - 2\mathcal{C}(\gamma_3)},
\]
\[
T_2 =\frac{\mathcal{C}(\gamma_2 + \gamma_3)\mathcal{C}((\sqrt{\gamma_1} + \sqrt{\gamma_3})^2) - \mathcal{C}^2(\gamma_3)}{\mathcal{C}(\gamma_2 + \gamma_3) + \mathcal{C}((\sqrt{\gamma_1} + \sqrt{\gamma_3})^2) - 2\mathcal{C}(\gamma_3)},
\]
\[
T_3 = k\frac{\mathcal{C}(\gamma_1)\mathcal{C}(\sqrt{\gamma_2} + \sqrt{\gamma_3})^2) - \mathcal{C}(\gamma_3)}{\mathcal{C}(\gamma_1+\gamma_3) + \mathcal{C}(\sqrt{\gamma_2} + \sqrt{\gamma_3})^2) - 2\mathcal{C}(\gamma_3)}
\]
\[
+ \frac{\mathcal{C}(\gamma_2)\mathcal{C}(\sqrt{\gamma_1} + \sqrt{\gamma_3})^2) - \mathcal{C}(\gamma_3)}{\mathcal{C}(\gamma_2+\gamma_3) + \mathcal{C}(\sqrt{\gamma_1} + \sqrt{\gamma_3})^2) - 2\mathcal{C}(\gamma_3)}, 
\]
\[
T_4 = k\frac{\mathcal{C}(\gamma_2)\mathcal{C}(\gamma_1 + \gamma_3) - \mathcal{C}(\gamma_3)}{\mathcal{C}(\gamma_1+\gamma_3) + \mathcal{C}(\sqrt{\gamma_2} + \sqrt{\gamma_3})^2) - 2\mathcal{C}(\gamma_3)}
\]
\[
\frac{\mathcal{C}(\gamma_1)\mathcal{C}(\gamma_2 + \gamma_3) - \mathcal{C}(\gamma_3)}{\mathcal{C}(\gamma_2+\gamma_3) + \mathcal{C}(\sqrt{\gamma_1} + \sqrt{\gamma_3})^2) - 2\mathcal{C}(\gamma_3)}. 
\]
\end{theorem}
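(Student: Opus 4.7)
The plan is to mirror the dual-LP approach used in Theorem \ref{thdual}. The primal linear program in Theorem \ref{th3} has objective $R_a + kR_b$; I would rewrite it with objective $kR_a + R_b$ (the same technique with the roles of $R_a$ and $R_b$ swapped) and take its dual. Writing $y_1,\ldots,y_4$ for the multipliers of the four cut-set constraints and $y_5$ for the multiplier of $\sum \lambda_i \leq 1$, the dual becomes: minimize $y_5$ subject to $y_1 + y_2 \geq k$, $y_3 + y_4 \geq 1$, $y_i \geq 0$, together with six inequalities of the form $y_5 \geq y_i\,\mathcal{C}(\cdot) + y_j\,\mathcal{C}(\cdot)$, one per primal variable $\lambda_1,\ldots,\lambda_6$. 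By weak LP duality, any feasible dual point yields an upper bound on $kR_a + R_b$, so it suffices to exhibit a single feasible $(y_1,\ldots,y_5)$ whose value equals $\max\{T_1,T_2,T_3,T_4\}$.

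The choice of dual variables is guided by the observation in the Remark that $T_1/k$ and $T_2$ coincide with the one-way relay outer bounds for flow $a\to b$ and $b\to a$ respectively, each obtained by balancing the two cut-set constraints in that direction. Accordingly, define the excess capacities
\[
A = \mathcal{C}(\gamma_1+\gamma_3) - \mathcal{C}(\gamma_3), \quad B = \mathcal{C}((\sqrt{\gamma_2}+\sqrt{\gamma_3})^2) - \mathcal{C}(\gamma_3),
\]
\[
E = \mathcal{C}(\gamma_2+\gamma_3) - \mathcal{C}(\gamma_3), \quad F = \mathcal{C}((\sqrt{\gamma_1}+\sqrt{\gamma_3})^2) - \mathcal{C}(\gamma_3),
\]
and set $y_1 = kB/(A+B)$, $y_2 = kA/(A+B)$, $y_3 = F/(E+F)$, $y_4 = E/(E+F)$. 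These values are non-negative and satisfy $y_1+y_2 = k$ and $y_3+y_4 = 1$ by construction, so the first two dual constraints hold with equality.

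All that remains is to evaluate each of the six $y_5$-constraints at this point. The balancing construction is designed so that the two constraints arising from $\lambda_1$ and $\lambda_5$ both collapse to the common value $T_1$, and similarly those arising from $\lambda_2$ and $\lambda_6$ both collapse to $T_2$. A direct substitution then shows that the constraint from $\lambda_3$ evaluates to $T_3 = kB\,\mathcal{C}(\gamma_1)/(A+B) + F\,\mathcal{C}(\gamma_2)/(E+F)$ and the one from $\lambda_4$ to $T_4 = kA\,\mathcal{C}(\gamma_2)/(A+B) + E\,\mathcal{C}(\gamma_1)/(E+F)$, matching the theorem statement. Consequently, the smallest $y_5$ satisfying all six inequalities is exactly $\max\{T_1,T_2,T_3,T_4\}$, and weak duality delivers the theorem.

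The only nontrivial aspect is guessing the right dual point; once its form is suggested by the one-way-relay interpretation of $T_1$ and $T_2$, the remainder is routine algebraic bookkeeping. In particular, no further argument is required to identify which of the four $T_i$ is the actual maximum for a given channel — the upper bound holds as stated irrespective of that determination, since only one feasible dual point is exhibited.
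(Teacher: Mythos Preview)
Your approach is essentially the same as the paper's: write the dual of the weighted-sum-rate linear program and exhibit a single feasible dual point, then read off the six constraint values as $T_1,\ldots,T_4$ (with the $\lambda_1,\lambda_5$ and $\lambda_2,\lambda_6$ pairs collapsing). Your choice of $(y_1,y_2,y_3,y_4)$ coincides with the paper's up to the factor of $k$ on $y_1,y_2$, which you correctly include so that $y_1+y_2=k$ holds for the objective $kR_a+R_b$; the paper's terse proof omits this scaling, but it is needed for dual feasibility, so your version is in fact the cleaner one.
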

\begin{proof}
The bound is obtained by writing the dual of the linear program and choosing the dual variables as:
\[
y_1 = \frac{\mathcal{C}((\sqrt{\gamma_2} + \sqrt{\gamma_3})^2) - \mathcal{C}(\gamma_3)}{\mathcal{C}(\gamma_1 + \gamma_3) + \mathcal{C}((\sqrt{\gamma_2} + \sqrt{\gamma_3})^2) - 2\mathcal{C}(\gamma_3)}, 
\]
\[
y_2 = \frac{\mathcal{C}(\gamma_1 + \gamma_3) - \mathcal{C}(\gamma_3)}{\mathcal{C}(\gamma_1 + \gamma_3) + \mathcal{C}((\sqrt{\gamma_2} + \sqrt{\gamma_3})^2) - 2\mathcal{C}(\gamma_3)}, 
\]
\[
y_3 = \frac{\mathcal{C}((\sqrt{\gamma_1} + \sqrt{\gamma_3})^2) - \mathcal{C}(\gamma_3)}{\mathcal{C}(\gamma_2 + \gamma_3) + \mathcal{C}((\sqrt{\gamma_1} + \sqrt{\gamma_3})^2) - 2\mathcal{C}(\gamma_3)}, 
\]
\[
y_4 = \frac{\mathcal{C}(\gamma_2 + \gamma_3) - \mathcal{C}(\gamma_3)}{\mathcal{C}(\gamma_2 + \gamma_3) + \mathcal{C}((\sqrt{\gamma_1} + \sqrt{\gamma_3})^2) - 2\mathcal{C}(\gamma_3)}. \]
\end{proof}
\vspace*{-3mm}
\subsection{Numerical Results}
Figure \ref{figure6} compares the numerical (solving a linear program) and analytical outer bounds for three channel conditions A, B, and C. The two analytical bounds are close to the numerical bound. The first analytical bound is better for symmetric rates, while the second outer bound is better for asymmetric rates. In case B, where $\gamma_1 = \gamma_2$, the first analytical bound matches with numerical bound for $k = 1$ as expected from Theorem \ref{captheorem}.
\begin{figure}[!t]
\centering
\includegraphics[width=3.2in]{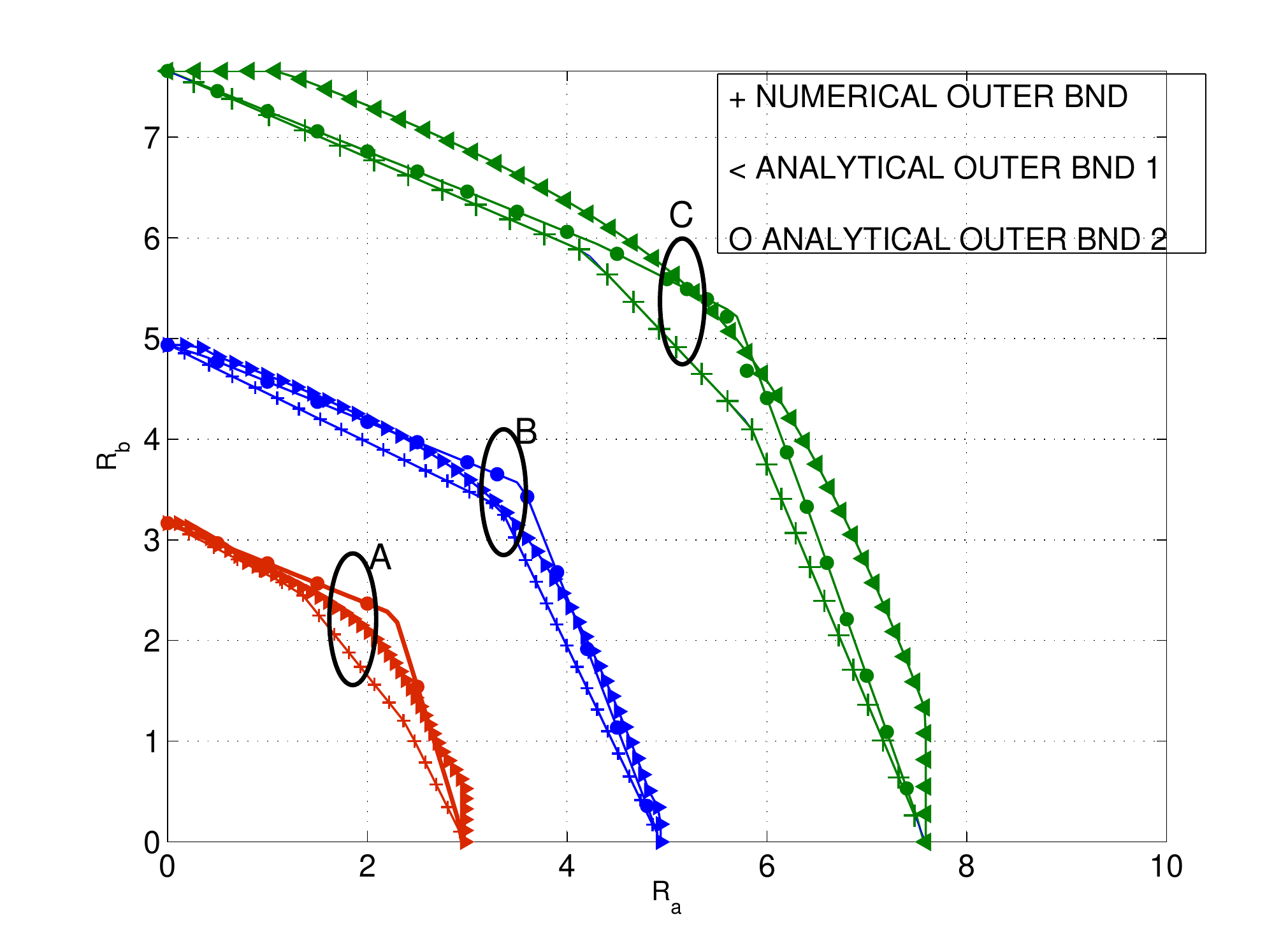}
\caption{Comparison of analytical and numerical outer bounds -- Case A: $\gamma_1 = 10$ dB, $\gamma_2 = 15$ dB, $\gamma_3 = 3$ dB, Case B: $\gamma_1 = 20$ dB, $\gamma_2 = 20$ dB, $\gamma_3 = 8$ dB, Case C: $\gamma_1 = 30$ dB, $\gamma_2 = 35$ dB, $\gamma_3 = 13$ dB,}
\label{figure6}
\end{figure}
Figure \ref{figure7} shows the threshold $\min(\gamma_{31}$, $\gamma_{32})$ defined in Theorem \ref{captheorem} as a function of $\gamma_2$ for a given $c = \gamma_1/\gamma_2$. Three values of $c$ are considered. The results illustrate for these cases the range of $\gamma_3 \le \min(\gamma_{31}$, $\gamma_{32})$ when the direct link can be ignored in the outer bound.  
\begin{figure}[!t]
\centering
\includegraphics[width=3.2in]{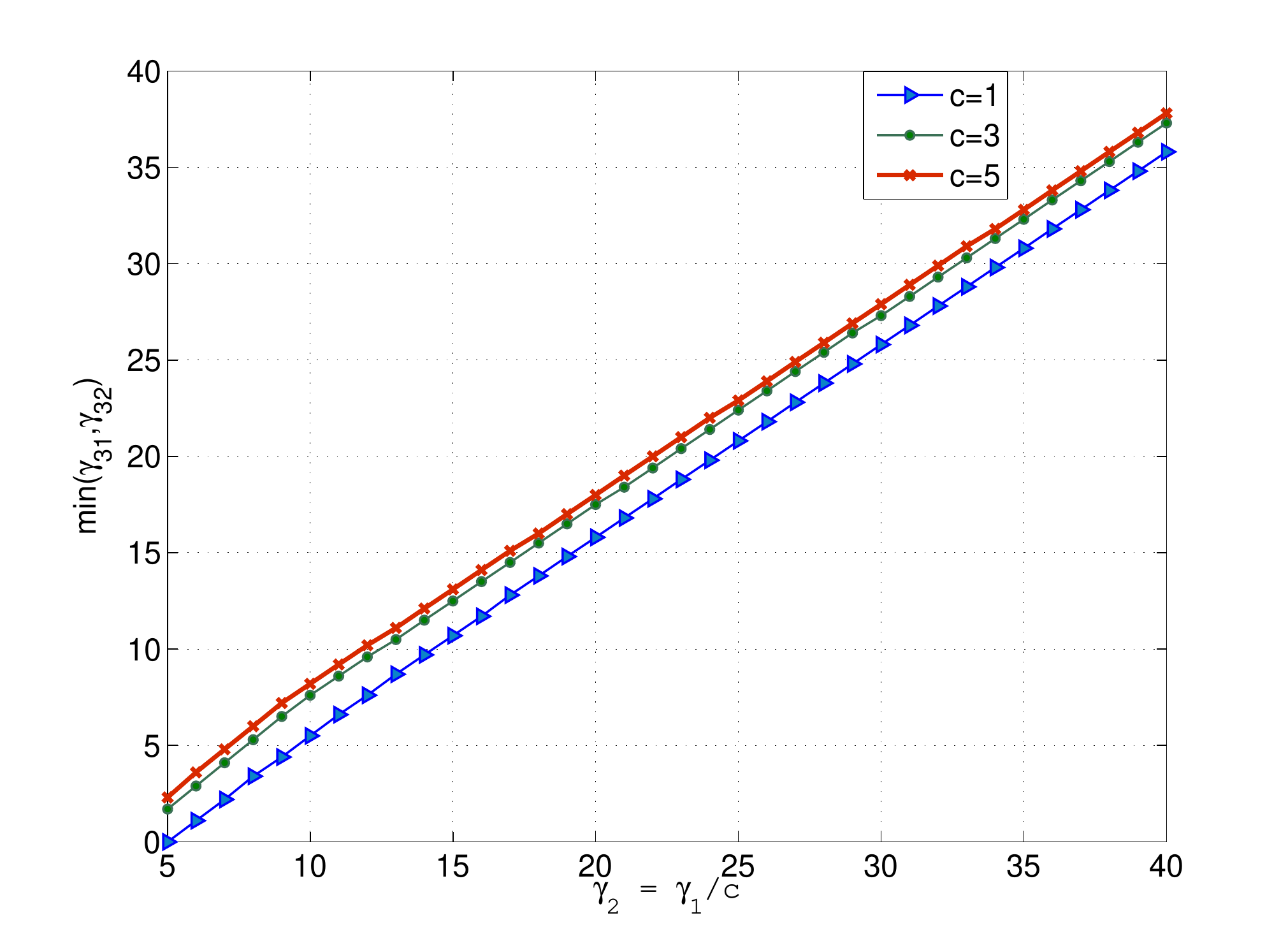}
\caption{Threshold for $\gamma_3 = \min(\gamma_{31},\gamma_{32})$ with $\gamma_1 = c\gamma_2$}
\label{figure7}
\end{figure}

\vspace*{-4mm}
\section{Summary}
The half-duplex two-way Gaussian relay channel with direct link has
been studied. First, an outer bound for the rate region achieved by
any protocol was derived. At any point on this outer bound, only four
out of the six possible states are required. Then, this outerbound was
compared with various existing protocols -- HBC, 6-state, and CoMABC
-- and their achievable rate regions. While the CoMABC protocol is
good near the maximum sum rate region, the 6-state protocol is good
for asymmetric rates, particularly at lower SNRs. The 6-state protocol
is able to achieve this even while being restricted to only DF. The
analytical outer bounds derived in this paper were shown to be close to
the numerically computed outer bound. These analytical outer bounds can
be used to gain a better understanding of the capacity region
boundary. We also obtained the symmetric capacity (to within 0.5 bits)
using the analytical bounds for some channel conditions where the
direct link between nodes a and b is weak. The analytical bounds using
the dual of the linear program could be potentially improved in the
future to identify more channel conditions where capacity can be
determined.

\bibliographystyle{IEEEtran}
\bibliography{IEEEabrv,master}

\end{document}